\newcommand{\featureset}{\ensuremath{I}}
\begin{document}

\title[Subjective Fairness in Privacy-Restricted Decentralised Conflict Resolution]{Agree to Disagree: Subjective Fairness in Privacy-Restricted Decentralised Conflict Resolution}

\author{Alex Raymond}
\email{alex.raymond@cl.cam.ac.uk}
\affiliation{%
 \institution{Department of Computer Science and Technology, University of Cambridge, UK \\ alex.raymond@cl.cam.ac.uk}
}

\author{Matthew Malencia}
\affiliation{%
 \institution{GRASP Laboratory, University of Pennsylvania, USA \\ malencia@seas.upenn.edu}
}

\author{Guilherme Paulino-Passos}
\affiliation{%
 \institution{Department of Computing, Imperial College London, UK \\ g.passos18@imperial.ac.uk}
}

\author{Amanda Prorok}
\affiliation{%
 \institution{Department of Computer Science and Technology, University of Cambridge, UK \\ asp45@cam.ac.uk}
}

\begin{abstract}

Fairness is commonly seen as a property of the global outcome of a system and assumes centralisation and complete knowledge. However, in real decentralised applications, agents only have partial observation capabilities. Under limited information, agents rely on communication to divulge some of their private (and unobservable) information to others. When an agent deliberates to resolve conflicts, limited knowledge may cause its perspective of a correct outcome to differ from the actual outcome of the conflict resolution. This is \textit{subjective unfairness}.

As human systems and societies are organised by rules and norms, hybrid human-agent and multi-agent environments of the future will require agents to resolve conflicts in a decentralised and \textit{rule-aware} way. Prior work achieves such decentralised, rule-aware conflict resolution through cultures: explainable architectures that embed human regulations and norms via argumentation frameworks with verification mechanisms. However, this prior work requires agents to have full state knowledge of each other, whereas many distributed applications in practice admit partial observation capabilities, which may require agents to communicate and carefully opt to release information if privacy constraints apply.

To enable decentralised, fairness-aware conflict resolution under privacy constraints, we have two contributions: (1) a novel interaction approach and (2) a formalism of the relationship between privacy and fairness. Our proposed interaction approach is an architecture for privacy-aware explainable conflict resolution where agents engage in a dialogue of hypotheses and facts. To measure the privacy-fairness relationship, we define subjective and objective fairness on both the local and global scope and formalise the impact of partial observability due to privacy in these different notions of fairness.

We first study our proposed architecture and the privacy-fairness relationship in the abstract, testing different argumentation strategies on a large number of randomised cultures. We empirically demonstrate the trade-off between privacy, objective fairness, and subjective fairness and show that better strategies can mitigate the effects of privacy in distributed systems. In addition to this analysis across a broad set of randomised abstract cultures, we analyse a case study for a specific scenario: we instantiate our architecture in a multi-agent simulation of prioritised rule-aware collision avoidance with limited information disclosure.
\end{abstract}

\keywords{fairness, privacy, multi-agent, dialogues, argumentation, explanations}  

\maketitle

\begin{abstract}

\section{}
Fairness is commonly seen as a property of the global outcome of a system and assumes centralisation and complete knowledge. However, in real decentralised applications, agents only have partial observation capabilities. Under limited information, agents rely on communication to divulge some of their private (and unobservable) information to others. When an agent deliberates to resolve conflicts, limited knowledge may cause its perspective of a correct outcome to differ from the actual outcome of the conflict resolution. This is \textit{subjective unfairness}.

As human systems and societies are organised by rules and norms, hybrid human-agent and multi-agent environments of the future will require agents to resolve conflicts in a decentralised and \textit{rule-aware} way. Prior work achieves such decentralised, rule-aware conflict resolution through cultures: explainable architectures that embed human regulations and norms via argumentation frameworks with verification mechanisms. However, this prior work requires agents to have full state knowledge of each other, whereas many distributed applications in practice admit partial observation capabilities, which may require agents to communicate and carefully opt to release information if privacy constraints apply. 

To enable decentralised, fairness-aware conflict resolution under privacy constraints, we have two contributions: (1) a novel interaction approach and (2) a formalism of the relationship between privacy and fairness. Our proposed interaction approach is an architecture for privacy-aware explainable conflict resolution where agents engage in a dialogue of hypotheses and facts. To measure the privacy-fairness relationship, we define subjective and objective fairness on both the local and global scope and formalise the impact of partial observability due to privacy in these different notions of fairness.

We first study our proposed architecture and the privacy-fairness relationship in the abstract, testing different argumentation strategies on a large number of randomised cultures. We empirically demonstrate the trade-off between privacy, objective fairness, and subjective fairness and show that better strategies can mitigate the effects of privacy in distributed systems. In addition to this analysis across a broad set of randomised abstract cultures, we analyse a case study for a specific scenario: we instantiate our architecture in a multi-agent simulation of prioritised rule-aware collision avoidance with limited information disclosure.

\tiny
 \keyFont{ \section{Keywords:} fairness, privacy, multi-agent, dialogues, argumentation, explanations} %
\end{abstract}

\section{Introduction and Motivation}

Cognition and autonomy allow intelligent agents in nature to capture information about their surroundings and independently choose a course of action in consonance with their unique and individual decision-making process. Societies thus emerged as collectives of individuals with (mostly) collaborative intent, aided by implicit and explicit systems of norms and rules. The particular complexity of some of these collectives lead some observers to personify or anthropomorphise these groups, attributing a misleading notion of centralised intent and agency to a coherent, but fully decentralised system. However, even the most harmonious and conformable populations in reality exhibit differences in perspective and disagreements across their members.

In view of the above, we understand that humans do not make decisions with \textit{truly} global information by virtue of the implausible assumption of omniscience. Rather, each individual acts on their own \textit{subjective} perspectives of local and global status. This subjectivity is not a flaw but instead a fundamental truth of decentralised systems where information is ultimately incomplete or imperfect. In such systems, agents judge outcomes of conflicts based on their partial knowledge of the world, which can lead to perceptions of \textit{unfairness} when outcomes differ from what other peers perceive as correct. Moreover, individual perspectives can differ drastically when agents \textit{choose to} retain information under concerns of privacy. As such, it is germane to transpose such considerations to human-agent and multi-agent systems of the future.

The concepts of fairness and justice are prevalent issues in human history along millenia and have been a central topic in areas such as economics \citep{Marx1875CritiqueGotha}, philosophy \citep{Rawls1991JusticeMetaphysical}, medicine \citep{Moskop2007TriagePrinciples}, and more recently, computer science \citep{li_interdisciplinary_2017}. However, fairness is predominantly regarded as a \textit{global} property of a system \citep{selbst_fairness_2019, verma_fairness_2018, narayanan_translation_2018}. In allocation problems such as nationwide organ transplant allocation, fairness concerns the outcome, i.e., whether patients are prioritised to receive donor organs accurately and without discrimination \citep{bertsimas_fairness_2013}. Assumptions of global scope and complete knowledge come naturally, as we can only guarantee that a system is fair to all if the information regarding all subjects involved is known. Those assumptions form an \textit{objective} concept of fairness.

In this work, we elicit a provocation to the habitual definition of fairness: in decentralised applications, where the assumptions of complete global knowledge are withdrawn, can we also discuss fairness with regards to the individual perception of each agent in a system, i.e. representing a \textit{subjective} notion of fairness? If knowledge is incomplete, can we enable agents to understand that apparently (subjectively) negative decisions are globally and objectively fair, i.e., decisions actually made \textit{for the greater good?}

\textbf{Explanations, Dialogues, and Privacy.} As the gap between objective and subjective fairness resides on the need for reasoning and justification, \textit{explanations} \citep{Cyras2019ExplanationsDispute, Rosenfeld2019ExplainabilitySystems, Sovrano2021FromExplanation} artlessly lend themselves as desirable tools to address this issue. This is the focus of much of prior literature concerning explanations to human users. However, also a concern is about explanations between artificial agents themselves, or from a human to an artificial agent. Thus, in a fairness-aware decentralised system with disputed resources, agents with explainable capabilities can expound the reasons for deserving resources in a \textit{dialogue} that is informed by their respective observations and grounded to a mutually agreed-upon definition of what is and is not fair, or a \textit{fairness culture}.%

Since the concerns for fairness also gravitate around the integration of humans and agents together via reasoning, we lean on architectures for explainable human-agent deconfliction \citep{Raymond2020Culture-BasedDeconfliction}. Ideally, participants in a multi-agent system should engage in dialogues \citep{Jakobovits1999DialecticFrameworks, Modgil2009ArgumentationGoals} and provide explanations for claiming specific resources. One approach for this problem is computational argumentation, a model for reasoning which represents information as arguments and supports defeasible reasoning capabilities -- as well as providing convincing explanations to all agents involved \citep{Dung1995OnGames, Amgoud2009UsingDecisions, Fan2015OnArgumentation}.

However, not every scenario or application allows for unrestricted exchange of information between agents \citep{dwork_algorithmic_2014}. Sensitivity and confidentiality aggravate the poignancy of \textit{privacy} concerns, rendering the \textit{subjective} perception of fairness issue, at best, non-trivial when privacy is concerned. We defend \textit{subjective fairness} as a perennial dimension of the fairness argument whenever a system has non-global knowledge or \textit{privacy} is concerned.

Whilst the dimensions of privacy, subjective fairness, and objective fairness are desirable in all systems, the simultaneous maximisation of all those aspects is often irreconcilable. If the agents %
lack full information, there cannot be any guarantee of an objectively fair outcome. Additionally, if there are privacy considerations, subjective fairness cannot be guaranteed as the agent requesting a resource does not receive justification for denial. Therefore, with privacy, neither objective nor subjective fairness is achieved.

This perspective can be observed in society. We illustrate this point as a situation where two passengers are disputing a priority seat in public transport. Even if both individuals are truthful and subscribe to the same culture of fairness (e.g. both agree that a pregnant person should have priority over someone who just feels tired), they must still share information regarding their own personal condition (and therefore, their reasons to justify their belief) to determine who should sit down. How much is each person willing to divulge in order to obtain that resource? Assuming that these individuals have a finite and reasonably realistic threshold on how much privacy they are willing to abdicate for a seat, certain situations will inevitably engender the dilemma of either: \textit{a)} \textit{forfeiting the dispute and conceding the resource} to observe their privacy limit, or \textit{b)} \textit{exceeding their reservations in privacy} and revealing more information than they should in order to remain in contention. 

With adamantine restraints on privacy, contenders will \textit{always} concede and end the debate if presenting a superior reason would exceed their limits of divulged information. Note that this does not mean that the conceding party \textit{agrees} with the outcome. After all, if a debate was lost \textit{exclusively} due to the inviolability of privacy, then the loser must still hold an argument that they would have used to trump their opponent, had it not been precluded by privacy limits. \textit{Objectively}, the fair decision can only be guaranteed if all relevant information is made available by all parties for a transparent and reasonable judgement. In the \textit{subjective} perception of the bested agent, they still believe they have a superior reason at that point in time, and are left with no choice but to civilly \textit{agree to disagree}.

\textbf{Related Work.} Studies of privacy in multi-agent systems have gained recent popularity \citep{Such2014ASystems, Torreno2017CooperativeSurvey, Prorok2017Privacy-preservingSystems}. More closely, \citep{Gao2016Argumentation-basedPreserved} also propose the use of argumentation in privacy-constrained environments, although applied to distributed constraint satisfaction problems. Their approach, however, treats privacy in an absolute way, while in our notion is softer, with information having costs, and we consider varying degrees of privacy restrictions. 

Contemporaneously, the burgeoning research on fairness in multi-agent systems focuses on objective global fairness, assuming complete knowledge about all agents \citep{bin-obaid_fairness_2018}. Some works break the global assumption by applying fairness definitions to a neighbourhood rather than an entire population \citep{emelianov_price_2019} or by assuming that fairness solely depends on an individual \citep{nguyen_local_2016}. The former studies objective fairness of a neighbourhood, assuming full information of a subset of the population subset, whilst the latter assumes agents have no information outside of their own to make judgements about fairness. These works do not address fairness under partial observability, wherein agents have partial information on a subset of the population, which we call subjective local fairness.

To study privacy and subjective fairness in distributed multi-agent environments, we look to previous work in explainable human-agent deconfliction. The architecture proposed in \citep{Raymond2020Culture-BasedDeconfliction} introduces a model for explainable conflict resolution in multi-agent norm-aware environments by converting rules into arguments in a \textit{culture} (see Section~\ref{section:argumentation}). Disputes between agents are solved by a dialogue game, and the arguments uttered in the history of the exchange compose an explanation for the decision agreed upon by the agents.

Notwithstanding its abstract nature, this architecture relies on two important assumptions: \textit{i)} that agents have complete information about themselves and other agents; and \textit{ii)} that dialogues extend indefinitely until an agent is cornered out of arguments -- thus being \textit{convinced} to concede. In most real-life applications, however, those assumptions occur rather infrequently. Fully decentralised agents often rely on local observations and communication to compose partial representations of the world state, and indefinite dialogues are both practically and computationally restrictive. We build on the state of the art by extending this architecture to support gradual disclosure of information and privacy restrictions.

\textbf{Contributions.} This paper stages the ensuing contributions: 
\begin{itemize}
    \item We present an architecture for multi-agent privacy-aware conflict resolution using cultures and dialogue games.
    \item We introduce a formalisation of the subjective fairness problem and the corresponding privacy trade-off.
    \item We simulate interactions in random environments and compare how different argumentation and explanation strategies perform with regards to our fairness metrics.
    \item We instantiate a multi-agent prioritised collision avoidance scenario and demonstrate practical instances of subjective fairness and privacy limitations.
\end{itemize}

The structure of this paper follows as: Section~\ref{section:background} introduces prior background from literature used in our study. In Section~\ref{section:architecture}, we introduce an argumentation-based architecture for decentralised conflict resolution under privacy-restricted communication. We provide an abstract formalism of the dimensions of fairness in conflict resolution in Section~\ref{section:fairness}. Then, we use the contributions of those two sections in Sections~\ref{section:random-experiments} and~\ref{section:boats} to perform empirical experiments on abstract conflicts and on an applied multi-agent collision avoidance scenario, respectively. Our results show that using better argumentative strategies suggests Pareto improvements for the fairness-privacy trade-off.

\section{Background}
\label{section:background}

We introduce the required theoretical background from present literature used in the development of our study and methods. Section ~\ref{section:argumentation} abridges the concept of argumentation frameworks, dialogues, followed by cultures in Section~\ref{section:cultures}.

\subsection{Argumentation and Dialogues}
\label{section:argumentation}

The concept of \textit{culture} \citep{Raymond2020Culture-BasedDeconfliction} reflects a collective agreement of norms and priorities in a multi-agent system, checked dynamically by means of \textit{verifier functions}. Abstract Argumentation Frameworks~\citep{Dung1995OnGames} are used to underpin the mechanics of cultures and conflict resolution dialogues.

\begin{definition} [Argumentation Framework]
An \emph{argumentation framework} is a digraph $AF = (\mathcal{A}, \mathcal{R})$, where $\mathcal{A}$ is a set of arguments (vertices) and $\mathcal{R} \subseteq \mathcal{A} \times \mathcal{A}$ is a set of attack relations between arguments (arcs). We say $\textit{attacks}(a,b)$ holds iff $(a,b) \in \mathcal{R}$. Stating $\textit{attacks}(a,b)$ is intuitively equivalent to defining that an argument $b$ is defeated by argument $a$. Likewise, a set $S \subseteq \mathcal{A}$ of arguments attacks another set of arguments $T$ (or $T$ is attacked by $S$) if any argument in $S$ attacks an argument in $T$. An argument $a \in \mathcal{A}$ is \textit{acceptable} with respect to a set $S$ of arguments iff for each argument $b \in \mathcal{A}$ that attacks $a$ there is a $c \in S$ that attacks $b$. In that case, $c$ is said to \textit{defend} $a$. A set $S$ of arguments is \textit{conflict-free} iff for all $a, b \in S$, $(a,b) \not\in \mathcal{R}$, and \textit{admissible} iff it is conflict-free and all its arguments are acceptable with respect to $S$.
\end{definition}

The notion of \textit{extensions} \citep{Doutre2001PreferredComputation} presents itself as semantics of acceptance for sets of arguments.

\begin{definition} [Preferred Extension]
Let $AF = (\mathcal{A}, \mathcal{R})$ be an argumentation framework and $S \subseteq \mathcal{A}$ be a set of arguments. We say $S$ is a \textit{preferred extension} of $AF$ iff: $S$ is conflict-free, defends all arguments in $S$, and is a maximal element among all admissible sets, with respect to set-theoretical inclusion. If an argument $a \in \mathcal{A}$ is present in all preferred extensions of $AF$, we say $a$ is \textit{sceptically accepted} in the preferred extensions.
\end{definition}

With those definitions in place, agents may now partake in dialectical interaction to resolve conflicts. Those dialogues are sequential in nature, where agents utter arguments and attempt to defeat the previously-used argument by their adversary. When an agent can no longer choose a valid argument, they lose the game and concede the contended resource to the winner. \citet{Jakobovits1999DialecticFrameworks} delineate a formalism for dialectical games, and we adapt some of their definitions below for single-argument dialogues.

\begin{definition} [Dialogue]
\label{definition:dialogue}
Let a player $w \in \{pr, op\}$ and an argument $a \in \mathcal{A}$. The adversary of $pr$ is denoted $\overline{pr} = op$, and $\overline{op} = pr$. A \textit{move} is a pair $(w, a)$. For a move $m = (w, a)$, we use $\textit{player}(m)$ to denote $w$ and $\textit{arg}(m)$ to denote $a$. A move is considered \textit{legal} iff it does not attack itself and is not already attacked %
by a previously-uttered argument. We say a \textit{dialogue} $D$ is any countable sequence $m_0,m_1,\dots,m_n$ of moves that satisfies:
\begin{enumerate}
    \item $\textit{player}(m_{i+1}) = \overline{\textit{player}}(m_i)$, i.e. the players take turns.
    \item $m_{i+1}$ (i.e. the next move) is legal.
    \item $m_{i+1} \notin \{m_0, m_1, \dots, m_i\}$, i.e. a move cannot be repeated.
    \item $\textit{attacks}(\textit{arg}(m_{i+1}), \textit{arg}(m_i))$, it attacks the adversary's last move
    \item $\textit{player}(m_0) = pr$, i.e. the proponent makes the first move.
\end{enumerate}

The dialogue $D$ is said to be \textit{about the position} $\textit{arg}(m_0)$. 

If there is no dialogue $D' = m_0,m_1,\dots,m_n,\dots,m_{n'}$ that extends $D$, then we say that $\textit{player}(m_n)$ is the \textit{winner} of $D$.
\end{definition}

\subsection{Cultures}
\label{section:cultures}
In such frameworks, arguments and attacks are static by nature and assumed true. Informally, the surviving sets of arguments under certain semantics (extensions) represent coherent conclusions given a fixed argumentation framework. This inflexibility is a strong limitation for dynamic interactive systems, where agents can have varied states and contexts, causing arguments to lose validity depending on the circumstances. \textit{Cultures} for conflict resolution \citep{Raymond2020Culture-BasedDeconfliction} embody rules and norms as high-level, dynamic arguments that are shared and agreed upon between all agents in a system, and act as a template for generating argumentation frameworks dynamically to represent possible contexts and situations that arise in a multi-agent system. During inference, arguments are dynamically \textit{verified} for correctness given the required information from both agents and the context of a specific dispute.

\begin{definition} [Culture]
\label{definition:culture}
Let $A$ be the set of all existing agents in a multi-agent environment. Let any two players $pr, op \in A$ be the proponent and opponent in a dialogue game. We say a \textit{motion} is any argument $a \in \mathcal{A}$ that may be used by proponent $pr$ to request a contended resource from opponent $op$. Let $\mathcal{K} \subseteq \mathcal{A}$ be the set of all motions in $\mathcal{A}$. Let $\mathcal{R}$ be the set of attacks between arguments in $\mathcal{A}$. We say a system has a \textit{culture} $C = (\mathcal{A}, \mathcal{R}, \mathcal{K})$ iff $|\mathcal{K}| > 0 $ and all agents share $C$ as their culture.
\end{definition}

\begin{definition} [Argument Verification]
Let $a \in \mathcal{A}$ and $w \in \{pr,op\}$ be an argument and a player, respectively. We denote $a$ as \textit{demonstrable} by agent-player $w$ iff checking the correctness of that argument admits a finite and computable decision procedure. Let $\zeta$ denote the set of all possible contexts in the environment. $\forall a \in \mathcal{A}$, $a$ admits a predicate function $v_a: w, z \rightarrow \{\texttt{True}, \texttt{False}\}$, where $w$ represents the player and $z \in \zeta$ is a context. We say $v_a$ is the \textit{verifier} function of argument $a$. Motions are hypothetical and their verifier functions always return $\texttt{True}$. An argument $a$ is \textit{demonstrably true} by player $w$ iff $v_a(w, z) = \texttt{True}$.
\end{definition}

\section{Privacy-aware Cultures}
\label{section:architecture}

The cultures introduced in \citet{Raymond2020Culture-BasedDeconfliction} enable explainable conflict resolution in multi-agent norm-aware environments. However, these cultures require full state knowledge among agents -- and conflicts are resolved through indefinitely-long dialogues. In real-world distributed applications, agents only have partial observations of other agents, real world constraints prevent limitless dialogue, and privacy concerns govern the types and amount of information that agents divulge.

\begin{example}
\label{example:belle-charlotte}
Suppose 2 agents, Belle and Cadence, are disputing a priority seat on public transport. They share a culture $C = (\mathcal{A}, \mathcal{R}, \mathcal{K})$ where $\mathcal{A} = \{\gamma, a, b\}$ contains the arguments 
\begin{itemize}
    \item (motion) $\gamma \equiv$ \textit{`I think I should have this seat.'}
    \item $a \equiv$ \textit{`I am older than you.'}
    \item $b \equiv$ \textit{`I have a more serious health condition.'}
\end{itemize}
The relations $\mathcal{R} = \{(a, \gamma), (b, \gamma), (b, a)\}$ determine the priority of those concepts in the agents' shared culture. If Belle and Cadence do not know each other's age and health condition, they will have no information to verify arguments $a$ and $b$, i.e., they cannot argue that they are \textit{demonstrably} older or more infirm than their adversary. At this stage, any of those arguments could only be raised as a \textit{hypothesis}. Additionally, both Belle and Cadence would need to break privacy and mutually reveal some personal information to reach a decision.
\end{example}

To allow agents to engage in dialogues under partial information and privacy constraints, we present a novel explainable conflict resolution architecture. We begin by looking at the aspect of \textit{alteroceptive cultures}, our proposed mechanism organising the space of interactions for privacy-aware conflict resolution. `Alteroceptive' is our coinage from a \textit{portmanteau} of the Latin word \textit{alter} (other) + the word \textit{reception}. It shall connote a \textit{`sense of other.'}

\subsection{Alteroceptive Cultures}

As shown in Example~\ref{example:belle-charlotte}, when no information is available and arguments are comparative, agents can only raise \textit{hypotheses}. Those serve as media for sharing information between agents, by enforcing that every agent shares their corresponding piece of information when eliciting a hypothesis, in order to avoid infinite exchanges arising from cycles of empty suppositions.

Once an agent raises a hypothesis and shares their pertaining information regarding an argument, the adversary should have enough information to effectively \textit{verify} the argument into a fact, since it has full knowledge of its own description and was given the other agent's partial description. This can be done without necessarily sharing further information.

Additionally, the relations between arguments need to be woven in such a way as to guarantee that hypotheses and facts only defeat the arguments pertaining to the adversary. In Example~\ref{example:belle-charlotte}, if the same agent wants (and is able) to utter both arguments $a$ and $b$, the attack $(b, a)$ in the culture should not yield an actual attack in an instantiation if both arguments are articulated by the same agent.
After all, having a trumping %
condition would be yet another reason in favour of -- not against -- the agent's claim. Hence, assuring that attacks only occur between adversaries renders a bipartition in the culture and its arguments. The conjunction of those elements culminates in an \textit{alteroceptive culture}.

\begin{definition} [Alteroceptive Culture]
\label{definition:alteroceptive}
Let $C = (\mathcal{A}, \mathcal{R}, \mathcal{K})$ be a culture%
. We define the \textit{expansion function} $\kappa$, that maps an argument $a \in \mathcal{A}$ to a set of new arguments (not in $\mathcal{A}$). For every $a \in \mathcal{A}\setminus\mathcal{K}$, $\kappa(a) = \{a_{H}^{pr}, a_{H}^{op}, a_{F}^{pr}, a_{F}^{op}\}$ four new arguments that represent, respectively:
\begin{itemize}
    \item $a_{H}^{pr}$: hypothesis ($H$) of $a$ where proponent $pr$ wins,
    \item $a_{H}^{op}$: hypothesis ($H$) of $a$ where opponent $op$ wins,
    \item $a_{F}^{pr}$: verified-fact ($F$) of $a$ where proponent $pr$ wins,
    \item $a_{F}^{op}$: verified-fact ($F$) of $a$ where opponent $op$ wins.
\end{itemize}
For motions in $\gamma \in \mathcal{K}$, $\kappa(\gamma) = \{\gamma_{H}^{pr}, \gamma_{H}^{op}\}$, since motions do not admit verification. We define the \textit{expanded set of arguments (in an alteroceptive culture)} $\mathcal{A}_x = \bigcup\limits_{a \in \mathcal{A}} \kappa(a)$. The \textit{expanded set of attacks} $\mathcal{R}_x$ is constructed from $\mathcal{R}$ and $\mathcal{A}$, and satisfies the following rules (see Figure~\ref{fig:alteroceptive}). For each element of $\kappa(a)$, where $a, b \in \mathcal{A}$ and $(a, b) \in \mathcal{R}$. For all $w \in \{pr, op\}$:
\begin{enumerate}
    \item $(a_{H}^{w}, a_{H}^{\overline{w}}) \in \mathcal{R}_x$ if $a \not \in \mathcal{K}$, i.e., non-motion hypotheses mutually attack each other;
    \item $(a_{F}^{w}, a_{F}^{\overline{w}}) \in \mathcal{R}_x$, i.e., verified-facts mutually attack each other; 
    \item $(a_{F}^{w}, a_{H}^{\overline{w}}) \in \mathcal{R}_x$  i.e., each verified-fact attacks their adversary's hypothesis;
    \item $(a_{H}^{w}, b_{H}^{\overline{w}}) \in \mathcal{R}_x$ and $(a_{H}^{w}, b_{F}^{\overline{w}}) \in \mathcal{R}_x$, i.e., hypotheses reproduce their original attacks to both hypotheses and verified-facts;  %
    \item there are no more elements in $\mathcal{R}_x$.
    \end{enumerate}

We say $C_x(C) = (\mathcal{A}_x, \mathcal{R}_x)$ is an \textit{alteroceptive expansion} of $C$. %
\end{definition}

\begin{figure}
\includegraphics[width=\linewidth]{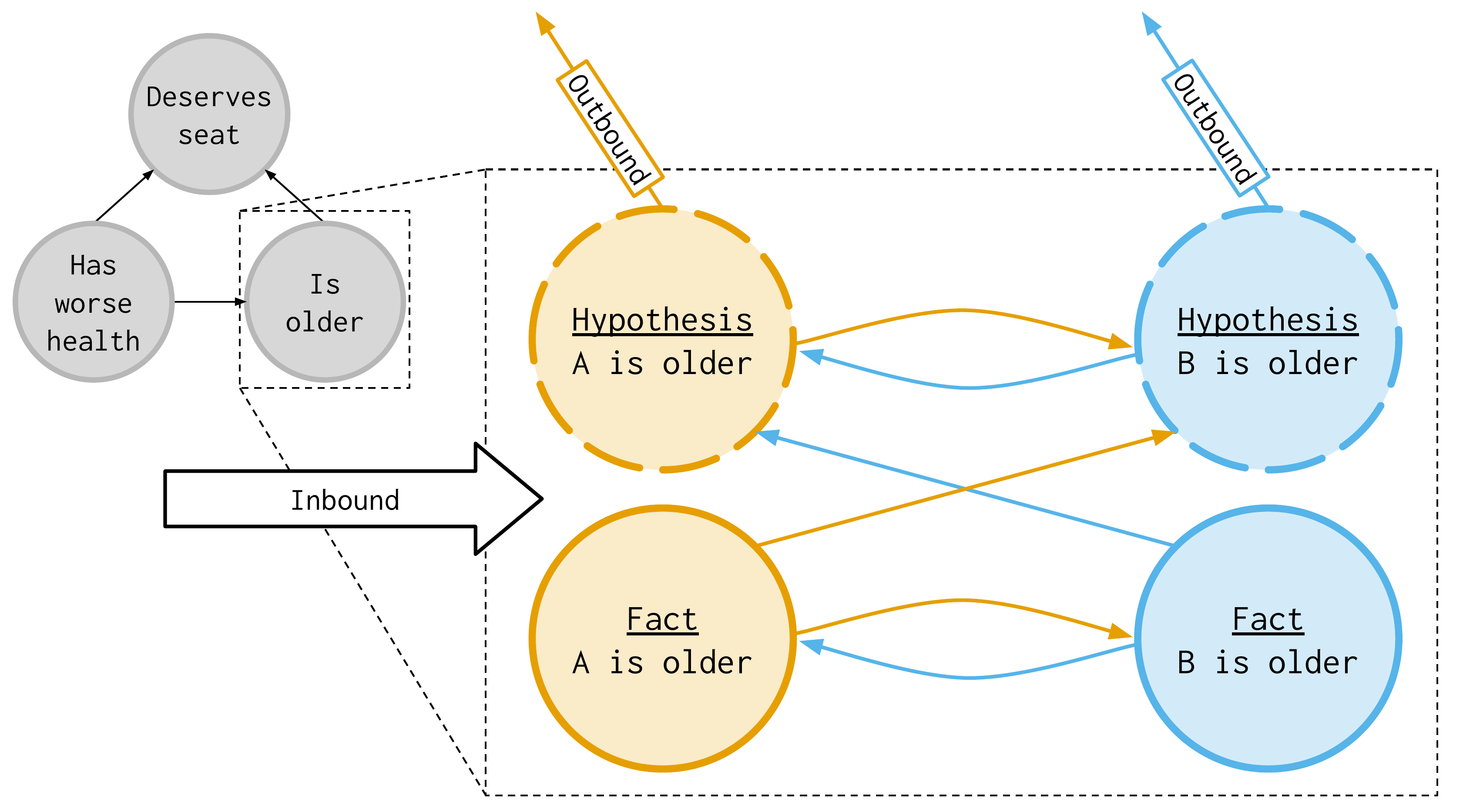}
\caption{Alteroceptive cultures transform the arguments of a given culture (grey) into  hypotheses (dashed nodes) and verified-facts (solid nodes). Shown here from Example~\ref{example:belle-charlotte}, argument $a$, "Is older", is transformed into the respective nodes for the orange opponent, $a_{H}^{op}$ and $a_{F}^{or}$, and the blue proponent, $a_{H}^{pr}$ and $a_{F}^{pr}$.
The inbound attack (left) from "Has worse health" attacks all four nodes and the outbound attacks (top) attack "Deserves seat" from the hypotheses.}
\label{fig:alteroceptive}
\end{figure}

The separation between hypotheses and facts is important to introduce a mechanism for gradual disclosure of information.

\begin{example} [cont.~\ref{example:belle-charlotte}]
Given the original culture seen in Example~\ref{example:belle-charlotte}, we can generate an extended set of arguments $\mathcal{A}_x = \bigcup \{\kappa(\gamma), \kappa(a), \kappa(b)\}$ and the corresponding $\mathcal{R}_x$ as described in Def.~\ref{definition:alteroceptive}. Suppose, then, that Cadence raises the motion $\gamma_{H}^{\textit{Cad}} \equiv $\textit{`I think I should have this seat,'} to which Belle promptly replies with $a_{H}^{\textit{Belle}} \equiv $ \textit{`I may be older than you, I am 60 years old.'} Figure~\ref{fig:alteroceptive} illustrates the expanded argument set in question. Note that Belle had to break privacy and reveal their age in order to use this argument. Cadence now has the necessary information (since they know their own age) to verify argument $a_{F}^{\textit{Cad}}$. If they succeed in verifying this argument (i.e., they are older than Belle), they have two options for rebuttal, depending on what information they intend to share:
\begin{enumerate}
    \item Use $a_{F}^{\textit{Cad}}$, reveal their age and refute Belle's claim of being older;
    \item or use $b_{H}^{\textit{Cad}}$, ignore the age dispute and move on to the health argument, revealing their health condition and hypothesising that they might be more ill than Belle.
\end{enumerate}
\end{example}

The example above shows that, for her next move, Cadence can choose to either reveal her age or health condition to progress the debate. This decision is affected by which aspect of its description an agent would like to keep private. Laying out arguments within this structure \textbf{still allows for the same dialogue game rules as before, but without the original assumption of complete information.} A dialogue under this framework would carry out normally and extend indefinitely as agents exchange moves until one of the agents loses by running out of arguments. Information can, therefore, still be communicated unreservedly. In the interest of quantifying relinquished privacy, we associate arguments to \textit{privacy costs}.

\subsection{Privacy-Aware Dialogues}
Every argument in an alteroceptive culture corresponds to a concession of privacy through communication of one's partial description. Fortunately, those changes preserve the structure of an argumentation framework, which makes alteroceptive cultures compatible with prior mechanics of dialogue, as well as extensions. Analogously to how humans are comfortable with sharing some types of personal information but not others, some features in agents' descriptions might be considered more sensitive, and thus having a higher \textit{privacy cost} to reveal. We combine the rules of dialogue (Def.~\ref{definition:dialogue}) and the notion of \textit{privacy cost} to provide an instantiation for our architecture. 

\begin{definition} [Privacy Cost and Budget]
Let $\mathcal{A}_x$ be the extended set of arguments of an alteroceptive culture. We say $\tau: \mathcal{A}_x \rightarrow \mathbb{Z}^+$ is a \textit{privacy cost} function. Let $A$ be the set of all agents. We define $\beta: A \rightarrow \mathbb{Z}^+$ as the \textit{privacy budget} function of an agent.
\end{definition}

\begin{definition} [Privacy-Aware Dialogue]
Let $w \in \{pr, op\}$ be any two players and $D = m_0,m_1,\dots,m_n$ be a dialogue. Let $moves(w, n) = \bigcup\limits_{m_i \in D}\{m_i \mid \textit{player}(m_i) = w \text{ and } i \leq n\}$ denote all the moves of a player $w$ up to round $n$. We say a dialogue $D$ is \textit{privacy-aware} iff, in addition to the criteria in Def.~\ref{definition:dialogue}, it also satisfies, for all $n$:
\begin{enumerate}
\setcounter{enumi}{5}
    \item $\beta(\textit{player}(m_{i+1})) \geq \sum\limits_{m_i \in moves(w, n)} \tau(\textit{arg}(m_i)),$ \\i.e., the player cannot cumulatively spend more than its privacy budget.
\end{enumerate}

\end{definition}

Our definition imposes a hard limit privacy: during a dialogue, agents cannot use an argument that would aggregately exceed their pre-defined privacy budgets. With finite privacy budgets, \textbf{the assumption of dialogues extending indefinitely until reaching a unanimous conclusion no longer holds.} Disputes end with an agent losing in one of two different ways. Either: \textit{i)} the agent runs out of arguments and is \textit{convinced} out of the dialogue, or \textit{ii)} the agent still has valid hypotheses and/or verified-facts that could attack the last move, but cannot \textit{afford} to use them and is forced to concede. In the former case, both agents agree on the correct outcome as the losing agent did not have any valid reasons to challenge the winner. In the latter case, however, the agent concedes and interrupts the dialogue before having all arguments successfully refuted - as it prefers to preserve their privacy. The agents tolerate the result but do not agree with it: they \textit{agree to disagree}. 

Notably, the choice of arguments can have a decisive impact on the dialogue game. If we consider scenario \textit{(ii)} above to be less desirable than \textit{(i)} for both agents, both agents must maximise the effectiveness of their moves so that dialogues end on the basis of argumentation instead of privacy budgets. Yet, optimal strategies are not readily available, as partial observability qualifies our privacy-aware dialogue game as a game of incomplete information.\footnote{Typically such games are analysed by a completion approach using Bayesian reasoning, and thus turned into imperfect information games \citep{Binmore1992FunGames}. The alternative allows agents to adopt mixed strategies, using information about an underlying probability distribution. We consider only pure strategies due to fewer needed assumptions, where agents do not use any probabilistic information. However, for pure strategies the optimal solution can be intractable \citep{Blair1993GamesInformation,Reif1984TheInformation}, and we thus limit ourselves to simple ones in the next sections.} 

With that in place, we next introduce our foundational terminology and abstract definitions governing the problem of subjective fairness in privacy-restricted decentralised conflict resolution.

\section{Problem Definition}
\label{section:fairness}

While the new argumentation architecture presented above enables dialogues under the partial information of privacy constraints, such partial observability may cause \textit{discrepancies between the outcome of a conflict resolution and the agents’ individual perception} of what the correct outcome should be, which we define as subjective unfairness. This new slant on the perspective of fairness introduces another dimension for conflict resolution whenever a system has privacy reservations or non-global knowledge. This section thus defines a foundational mathematical formalism to study fairness and its relationship to privacy in decentralised conflict resolution.\footnote{We dissociate this formalism from Section~\ref{section:architecture}'s emphasis on argumentation frameworks to the higher-level space of conflict resolution, as AFs are one of many conflict resolution methods. We hope others in the conflict resolution domain will build on this work and study subjective fairness using their tools of choice.}

Consider a situation in which a population of agents in a multi-agent system interacts and disputes advantages or resources. Agents interact in a pairwise manner via \textit{disputes}, and a final system state is achieved through a sequence of such interactions. We observe whether this final state is fair given complete knowledge: this we call \textit{objective fairness}. A second aspect is whether the outcome is fair from the perspective of each agent. This is \textit{subjective fairness}. Scope further augments our setting: the term \textit{global} relates to population-wide observations, whilst \textit{local} concerns pairwise interactions.

We remove the assumptions of unrestricted information-sharing, instead assuming that \textit{privacy} restricts an agent's willingness to divulge information. Particularly, we analyse how privacy between agents is an impediment towards a guarantee of both objectively and subjectively fair outcomes. We formalise such concepts below.

\subsection{Information, Privacy and Fairness}

We postulate that agents possess idiosyncratic features with varied values across a population, and that the set of all features that describe
an agent is, by reason, called a \textit{description}. Assume a distributed multi-agent setting where agents have perfect knowledge about their own features but none-to-partial knowledge about other agents' descriptions. We first define the (full) description of an agent as an $n$-tuple of features.%

\begin{definition}[Description]
For the rest of this section, we consider a fixed set $A$ of existing agents in an environment and an underlying set $F$ of all possible values of a feature. The set of \textit{features} is $\featureset = \{1, \dots, n\}$. We define $\mathcal{F} = F^n$ as the set of \textit{feature descriptions}, that is, the set of n-tuples, each entry representing the value of a feature. The \textit{description function} $d: A \rightarrow \mathcal{F}$ is a function that maps each agent to its full description in terms of features.
\end{definition}

Since we are considering privacy-sensitive applications, we attribute a \textit{privacy cost} to each feature, which is then aggregated to produce a cost for the full description. For brevity, all further mentions of \textit{cost} shall refer exclusively to \textit{privacy} cost. The constant \(\texttt{unknown}\) represents an undisclosed value of a feature.

\begin{definition}[Privacy Cost]
\label{definition:privacy-cost}
The set of \textit{private features} is $F_p = F \cup \{\texttt{unknown}\}$. A \textit{partial description} is an element of the set $\mathcal{F}_p = {F_{p}}^n$, that is, an n-tuple with each entry corresponding either to a feature value or to \(\texttt{unknown}\). Each feature $i \in \featureset$ has an associated \textit{cost}, $k_i$. A \textit{cost function} is defined as $\tau: \mathcal{F}_p \times \featureset \rightarrow \mathbb{Z}^{+}$, such that $\tau(f_p, i) = 0$ if $f_p = \{\texttt{unknown}\}$, and $\tau_f(f_p, i) = k_i \in \mathbb{Z}^{+}$, otherwise. Let $x = (x_1, \dots, x_n) \in \mathcal{F}_p$. The description cost function $\mathcal{T}: \mathcal{F}_p \rightarrow \mathbb{Z}^{+}$ is given by $\mathcal{T}(x) = \sum\limits_{i \in \featureset} \tau(x_i)$.
\end{definition}

We can now consider how agents interact: we denote every one-to-one interaction a conflict resolution dispute, or \textit{dispute}, for brevity. Agents will act with antagonistic intent: one of them pushes for a change in the status quo, and is thus called \textit{proponent}, while the other is called \textit{opponent}.

In such pairwise disputes between agents, we need a mechanism for defining what determines an \textit{objectively fair outcome} of a contest, given two descriptions of agents. Following the idea behind cultures, we assume there is a \textit{shared definition of fairness} across agents.\footnote{This makes no commitment on what is the ontological status of this shared definition, or whether it is deemed `correct.' We simply assume there is such a definition and that it is shared among the agents.} An objectively fair outcome disregards privacy, since it is the outcome which should be achieved under complete information.

As opposed to this perfect outcome, we also consider what an agent involved in a dispute \textit{considers} a fair outcome. This we define as the \textit{subjectively fair outcome}, according to one of the agents in the dispute, which is dependent on the information available to it. Any agent always has full information of oneself, but may only have partial information about the other party.

\begin{definition}[Fair Outcomes]
  We define the set of \textit{roles} in any dispute to be $R = \{pr, op\}$, two constant symbols denoting respectively ``proponent'' and ``opponent'', for the rest of the section. An \textit{objectively fair outcome} function $\omega: \mathcal{F} \times \mathcal{F} \rightarrow R$ receives the descriptions of the proponent and opponent agents, respectively, and decides which should win a dispute by returning the role of the winning agent in the dispute. %
  The \textit{subjectively fair outcome} function $\omega_p: R \times \mathcal{F} \times \mathcal{F}_p \rightarrow R$ receives as input whether the agent under consideration (the \textit{subject}) is the proponent or opponent, the description of the agent, and the private description of the adversary, and outputs the role of the agent which deserves to win the dispute.
\end{definition}

The choice of information disclosure to the adversary is determined by a \textit{disclosed information function} $\alpha$, which takes an agent (who is deciding which information to disclose), as well as the adversary, and produces a partial description $\mathcal{F}_p$ of the agent that fits under a specific privacy budget $g \in \mathbb{Z}^{+}$. 

\begin{definition}[Disclosed Information]
We say $\alpha: A \times A \times \mathbb{Z}^{+} \rightarrow \mathcal{F}_p$ is a \textit{disclosed information function} if and only if, for any agents $a, b \in A$ and privacy budget $g \in \mathbb{Z}^{+}$, $\alpha$ satisfies $\mathcal{T}(\alpha(a, b, g)) \leq g$.
\end{definition}

Once both agents made their decisions on how to generate their partial descriptions, the dispute resolution function returns the winner of a dispute with partial information on both sides.

\begin{definition}[Dispute Resolution]
A \textit{dispute resolution function} $\phi: \mathcal{F}_p \times \mathcal{F}_p \rightarrow R$ decides whether the proponent or the opponent wins the contest given their partial feature descriptions, respectively, by returning the role of winning agent in the dispute.
\end{definition}

The difference between functions $\omega$, $\omega_p$, and $\phi$ goes as follows: the first, $\omega$, returns the objectively fair outcome if all information is public and mutually known between both agents (complete). The function $\omega_p$ represents the perspective of \textit{the subject} upon receiving disclosed information from the adversary, after all, an agent always knows its full description but has partial information about other agents (depending on what is disclosed). Lastly, $\phi$ represents the final outcome given both agents' partial descriptions, after the dispute is resolved in some way.

While those definitions cover the essential concepts for pairwise (i.e. local) interactions, we still lack a relation of such pairwise interactions to a global state of the system. We assume the existence of a set $S$ of possible (global) states in which the system can be, with regards to the population of agents and a global notion of fairness. From an initial state, a set of local interactions transition the system into a final state. This is said to be the \textit{transition} of the system. 

\begin{definition}[System Transition Function]
  Let $A$ be the set of agents, $\alpha$ be a disclosed information function, $\phi$ the dispute resolution function, $g$ a privacy budget, and $S_0 \in S$ a state (referred as the \textit{initial state}). A \textit{system transition function} $\sigma$ is a function such that $\sigma(A, \alpha, \phi, g, S_0) \in S$, and we call $\sigma(A, \alpha, \phi, g, S_0) = S_F$ the \textit{final state}.
\end{definition}

We have introduced two aspects of fairness: perspective and scope. Perspective differentiates objective from subjective fairness, while scope differentiates global from local fairness. We formalise divergences in conflict resolution outcomes through the notion of a \textit{fairness loss}.  

\subsection{Fairness Loss}

The outcomes of privacy-restricted fairness disputes may disagree with our previously-defined notions of fair outcomes. We introduce \textit{fairness loss functions} as a means for comparing these.

The first definition for loss evaluates whether the dispute resolution outcome matches the objectively fair one, for a given privacy budget.

\begin{definition}[Objective Local Fairness Loss]
\label{definition:objective-local-loss}
The \textit{objective local fairness loss} function $l_{OL}: A \times A \times \mathbb{Z}^{+} \rightarrow \{0, 1\}$ is defined as 
\[
    l_{OL}(a, b, g) = 
        \begin{cases} 
          0, & \text{if } \phi(\alpha(a, b, g), \alpha(b, a, g)) = \omega(d(a), d(b)) \\
          1, & \text{otherwise}
        \end{cases}
\]
where $a, b \in A$ are agents, with $a$ the proponent and $b$ the opponent, and $g \in \mathbb{Z}^{+}$ denotes a privacy budget.
\end{definition}

Our second definition formalises whether the dispute resolution outcome is the same as the subjectively fair outcome for the agent playing the role $r$ in the dispute.

\begin{definition}[Subjective Local Fairness Loss]
\label{definition:subjective-local-loss}
The \textit{subjective local fairness loss} function $l_{SL}: A \times A \times R \times \mathbb{Z}^{+} \rightarrow \{0, 1\}$ is defined as 
\[
    l_{SL}(a, b, r, g) = 
        \begin{cases} 
          0, & \text{if } \phi(\alpha(a, b, g), \alpha(b, a, g)) = \omega_p(r, d(a), \alpha(b, a, g)) \\
          1, & \text{otherwise}
        \end{cases}    
\]
where $a, b \in A$, $r \in R$, and $g \in \mathbb{Z}^{+}$ denotes a privacy budget. %
\end{definition}

Finally, for objective and subjective global fairness, we characterise its requirements, but leave it to be specified application-wise. We present an applied experiment of objective and subjective global fairness in Section~\ref{section:boats}.

\begin{definition}[Global Fairness Loss]
  \label{definition:objective-global-loss}
  The \textit{objective global fairness loss} function $\Omega: S \rightarrow \mathbb{R}^{+}$, maps a state of a system to an unfairness value according to an objective notion of fairness. Analogously, a \textit{subjective global fairness loss} function $\Omega_p: S \rightarrow \mathbb{R}^{+}$ maps the same state to an unfairness value that stems from the subjective perception of the population. A higher value in either means that the state is less desirable, that is, less fair.
\end{definition}

We now introduce definitions pertaining to `orderly' cases of the previous definitions. Intuitively, a dispute resolution function is one such that when all information is available, an objective local fair outcome is achieved. When that is the case, we call it \textit{publicly sound}, formally as follows:

\begin{definition}
A dispute resolution function $\phi$ is \textit{publicly sound} iff for all $f_a, f_b \in \mathcal{F}$, $\phi(f_a, f_b) = \omega(f_a, f_b)$.
\end{definition}

We also define an agent being reasonable when, given perfect information about the other agent, $a$ always considers the objectively fair outcome a subjectively fair outcome, in any contest with another agent $b$. Being reasonable thus implies that the only deterrent towards an objectively fair outcome is partial information, as a reasonable agent will always agree with the objectively fair outcome when given enough information.

\begin{definition}
An agent $a \in A$ is \textit{reasonable} iff for any other agent $b \in A$, and any role $r \in R$, $\omega_p(r, d(a), d(b)) = \omega(d(a), d(b))$.
\end{definition}

Global outcomes can now connect with local outcomes by stipulating that, whenever complete information is available and the dispute resolution function is publicly sound, then the objective global fairness loss is 0. This corresponds to system transitions in which, whenever all disputes are resolved with objectively fair outcomes, then the final state is always objectively globally fair. 
\begin{definition}
  \label{definition:publicly-unbiased}
  Let $\alpha$ be a disclosed information function, $\phi$ be a publicly sound dispute resolution function, and $g \in \mathbb{Z}^{+}$ be such that, for every pair of agents $a,b \in A$, $\phi(\alpha(a, b, g), \alpha(b, a, g)) = \omega(d(a), d(b))$.
  A system transition function $\sigma$ is \textit{publicly unbiased} iff for any state $S_0 \in S$, we have that $\Omega(\sigma(A, \alpha, \phi, g, S_0)) = \Omega_p(\sigma(A, \alpha, \phi, g, S_0)) = 0$.
\end{definition}

Finally, we can state a result, thus proving that:

\begin{theorem}
\label{theorem:zero-loss}
If $\phi$ is a publicly sound dispute resolution function, every agent in $A$ is reasonable, $g \in \mathbb{Z}^{+}$ is such that for all $a \in A$, $\alpha(a,g) = d(a)$, and $\sigma$ is publicly unbiased, then, for all $a, b \in A$ and state $S_0 \in S$, $l_{OL}(a, b, g) = l_{SL}(a, b, g) = \Omega(\sigma(A, \alpha, \phi, g, S_0)) = \Omega_p(\sigma(A, \alpha, \phi, g, S_0)) = 0$. 
\end{theorem}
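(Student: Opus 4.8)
The plan is to prove this by directly unwinding the relevant definitions, since the hypotheses are engineered precisely so that full disclosure collapses every partial-information or subjective notion onto its objective counterpart; there is no genuine optimization or existence argument to carry out. First I would fix the reading of the disclosure hypothesis: the statement writes $\alpha(a,g) = d(a)$, but $\alpha$ nominally takes the adversary as a second argument, so I would interpret this as saying that under budget $g$ each agent discloses its \emph{entire} description regardless of opponent, i.e. $\alpha(a,b,g) = d(a)$ and $\alpha(b,a,g) = d(b)$ for all $a,b \in A$. (This implicitly assumes $g \geq \mathcal{T}(d(a))$ for every $a$, which is what makes full disclosure affordable.)

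First I would establish $l_{OL}(a,b,g) = 0$. Substituting full disclosure into the dispute-resolution call gives $\phi(\alpha(a,b,g),\alpha(b,a,g)) = \phi(d(a),d(b))$. Since $d(a),d(b) \in \mathcal{F}$ are full descriptions and $\phi$ is publicly sound, this equals $\omega(d(a),d(b))$. By Definition~\ref{definition:objective-local-loss} the two branches of the case split agree, so $l_{OL}(a,b,g) = 0$. Next I would handle $l_{SL}$ using the same left-hand value: the condition in Definition~\ref{definition:subjective-local-loss} requires $\phi(\alpha(a,b,g),\alpha(b,a,g)) = \omega_p(r, d(a), \alpha(b,a,g))$. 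The left side is again $\omega(d(a),d(b))$, while full disclosure turns the right side into $\omega_p(r, d(a), d(b))$, which equals $\omega(d(a),d(b))$ because agent $a$ is reasonable. Hence $l_{SL}(a,b,r,g) = 0$ for every role $r$, and the role argument suppressed in the theorem statement is immaterial since reasonableness quantifies over all of $R$.

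Finally, for the two global losses I would invoke the publicly-unbiased hypothesis. The precondition embedded in Definition~\ref{definition:publicly-unbiased} — that $\phi(\alpha(a,b,g),\alpha(b,a,g)) = \omega(d(a),d(b))$ holds for \emph{every} pair $a,b$ — is exactly the equality proven in the first step, now read universally rather than for a fixed pair; together with $\phi$ being publicly sound and $\alpha$ a disclosed information function, this meets the definition's antecedent. Because $\sigma$ is publicly unbiased relative to these $\alpha,\phi,g$, the definition yields $\Omega(\sigma(A,\alpha,\phi,g,S_0)) = \Omega_p(\sigma(A,\alpha,\phi,g,S_0)) = 0$ directly, for any $S_0 \in S$. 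Chaining the four equalities then gives the claim.

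The main obstacle is not mathematical depth but careful bookkeeping and notation reconciliation. I must make sure the local equality from the first step is established for all pairs so that it can serve as the universal precondition of the publicly-unbiased definition, and I must silently repair the two notational slips (the two-argument $\alpha$ and the role-free $l_{SL}$) so that the definitional chain type-checks cleanly. Once those are pinned down, every step is a single substitution followed by one appeal to soundness, reasonableness, or public unbiasedness respectively.
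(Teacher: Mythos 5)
Your proof is correct and follows essentially the same route as the paper's own: substitute full disclosure into each loss definition, then discharge $l_{OL}$ by public soundness, $l_{SL}$ by reasonableness, and the two global losses by the publicly-unbiased property of $\sigma$. If anything, your version is slightly more careful than the paper's, since you explicitly repair the arity mismatch in $\alpha$ and the suppressed role argument of $l_{SL}$ (the paper's proof quantifies the role over $\{a,b\}$ rather than $R$), and you note that the local equality must be established universally before invoking Definition~\ref{definition:publicly-unbiased}.
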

\begin{proof}
Considering Def.~\ref{definition:objective-local-loss}, we know that for a privacy budget $g$ high enough such that $\alpha(a,g) = d(a)$ and $\alpha(b,g) = d(b)$, we achieve a global loss of $0$. We can then replace the private descriptions $\alpha(a,g)$ and $\alpha(b,g)$ with their public equivalents $d(a)$ and $d(b)$ respectively, obtaining $\phi(d(a), d(b)) = \omega(d(a), d(b))$. Since $\phi$ is publicly sound, this holds.

A similar reasoning applies to $l_{SL}$ in Def.~\ref{definition:subjective-local-loss}. Suppose a $g$ high enough such that $\alpha(a,g) = d(a)$ and $\alpha(b,g) = d(b)$, for every $a, b \in A$. Applying this to $l_{SL}$, for any $o \in \{a, b\}$, the value is 0 if $\phi(d(a), d(b)) = \omega_p(o, d(a), d(b))$. This holds, since $a$ and $b$ are reasonable agents.

Finally, for $\Omega$, it is a direct consequence of Def.~\ref{definition:publicly-unbiased}. Due to the constraint on $g$, $\alpha(a, b, g) = d(a)$, and similarly for $b$. Thus $\phi(\alpha(a, b, g), \alpha(b, a, g)) = \omega(d(a), d(b))$ is satisfied since $\phi$ is publicly sound.
\end{proof}

With the extant formalisms providing a frame of reference and context, we introduce a problem statement that predicates the motivation and guides our resulting contributions in this paper.

\textbf{Problem Statement.} Let $A$ be a set of agents. We assume all agents in $A$ are self-interested and will experience mutual conflicts of interest regarding contended resources. We assume an initial state $S_0$, which transitions depending on outcomes of agent conflicts. For every conflict between any agents $a, b \in A$ that arises over a contended resource, agents will engage in a dispute with mutually-exchanged partial information. This will be governed by means of: an equal privacy budget $g$ for both agents, a disclosed information function $\alpha$ representing the \textit{strategy} of both agents for releasing information, and a dispute resolution function $\phi$. For any given $g$, find an $\alpha$ that minimises $\Omega$ and $\Omega_p$.%

\section{Empirical Analysis}
\label{section:random-experiments}

In this section, we propose an experiment to measure the effectiveness of distinct dialogue strategies with respect to global and local fairness under varying degrees of privacy. As shown by Theorem~\ref{theorem:zero-loss}, both objective and subjective fairness loss definitionally converge to zero if the conflict resolution mechanism is publicly sound and agents act reasonably. However, this is a loose bound -- as those definitions do not account for the specific mechanics of dialogues, conflict resolution, or fairness.

Consequently, we use the architecture proposed in Section~\ref{section:architecture} to instantiate scenarios with randomly-generated agents and cultures. Our interest lies in empirically observing the impact on global and local fairness under different privacy budgets using four different argumentation strategies. Below, we introduce the details for our experimental setup and evaluation mechanisms.

\subsection{Setup}
\label{section:random-setup}
Let $A = \{q_1, q_2, \dots, q_{|A|}\}$ be a finite set of ${|A|}$ agents. Every agent $q \in A$ possesses an $m$-tuple $d(q) = (i_1, i_2, \dots, i_m)$ for all $i \in I$, where $|I| = m$, representing that agent's \textit{feature description}, i.e., its internal traits and characteristics. The value function $\mu: A \times I \rightarrow \mathbb{Z}^+$ returns the numerical value of a feature $i \in I$ for an agent $q \in A$.

Let $C = \{\mathcal{A}, \mathcal{R}, \mathcal{K}\}$ be a culture, where $\mathcal{A} = \{\gamma, a_1, a_2, \dots, a_m\}$ is composed of $m + 1$ arguments (%
with a single motion $\gamma \in \mathcal{K}$). Let $\textit{index}(a_j) = j, a_j \in \mathcal{A}$ denote the index of an argument. We generate $\mathcal{R}$ randomly, satisfying the conditions: \textit{i)} the underlying $AF = (\mathcal{A}, \mathcal{R)}$ has exactly one connected component; \textit{ii)} for every $(a, b) \in \mathcal{R}$, $\textit{index}(a) > \textit{index}(b)$. 

Non-motion arguments in $C$ will represent a feature comparison between two agents. We consider the alteroceptive expansion of $C$, $C_x = (\mathcal{A}_x, \mathcal{R}_x)$ (Def.~\ref{definition:alteroceptive}). Every verified-fact argument $a_{F}^{q_j} \in \mathcal{A}_x$ is associated to a verifier function $v_a(q_j, q_k) = \texttt{True}$ if $\mu(q_j, \textit{index}(a)) > \mu(q_k, \textit{index}(a))$; otherwise $\texttt{False}$, for $q_j, q_k \in A$. All hypotheses are trivially associated to verifier functions that always return \texttt{True}.

Informally, this means that every feature $i \in I$ is represented in a dialogue between two agents by the respective hypotheses and verified-facts regarding which agent has a superior value in feature $i$. This abstractly represents any potential feature in a system.

\subsection{Evaluation Metrics} \label{subsection:evaluation}

To empirically study the impact of privacy on local and global fairness using different argumentation strategies, we define two evaluation metrics. First, we consider the \textit{aggregated subjective local unfairness} as the summation of the subjective local fairness loss $l_{SL}$ over all pairs of distinct agents. In our case, this is modelled by dialogues that end due to a lack of privacy budgets from one of the agents, as noted by dispute result \textit{(ii)} in Section~\ref{section:architecture}.

To measure \textit{objective local fairness loss}, we calculate a \textit{ground truth} of all pairwise interactions of agents. This ground truth is defined as an $|A| \times |A|$ matrix $\mathbf{GT}$, with entries being elements of $R = \{pr, op\}$. The entries of $\mathbf{GT}$ are the \textit{objectively fair outcome} of the dispute in which agents $a_j$ and $a_k$ assume the roles of $pr$ and $op$, respectively, that is: $\mathbf{GT}_{j,k} = \omega(d(q_j), d(q_k))$.

Our instantiation of the \textit{objectively fair outcome function} $\omega$ is defined the following procedure:
\begin{enumerate}
    \item Given the complete descriptions $d(pr)$ and $d(op)$, run all verifier functions for all arguments and remove all \textit{verified-fact} arguments that return \texttt{False}.
    \item Check for \textit{sceptical acceptance}\footnote{We used the $\mu$\texttt{-toksia} \citep{Niskanen2020-toksia:Reasoner} SAT-based solver for the simulations.} of the proponent's motion $\gamma_H^{pr}$. If yes, then return $pr$. Return $op$ otherwise.
\end{enumerate}

The resulting \textit{ground truth} can also be represented as a digraph $G_{\mathbf{GT}} = (A, E)$, where for every two distinct agents $q_j, q_k \in A$, we say an arc $(q_j, q_k) \in E$ iff $\mathbf{GT}_{j,k} = pr$ and $\mathbf{GT}_{k,j} = op$. We generalise this definition for any strategy $\alpha$ and denote the digraph $G_{\alpha}$ as a \textit{precedence graph} of agents.

Objective global fairness loss compares the ground truth precedence graph, $G_{\mathbf{GT}}$, to the precedence graph resulting from a strategy, $G_{\mathbf{RE}(g, \alpha)}$. To compare these two precedence graphs, we use the DAG dissimilarity metric seen in \citep{Malmi2015BeyondGraphs}, defined below.

Let $G_1 = (A, E_1)$ and $G_2 = (A, E_2)$ be two precedence graphs. Let $e$ denote an arc $(q_j, q_k)$. Let $c_1$ denote the number of occurrences where $e \in E_1$ and its reverse $(q_k, q_j) \in E_2$, $c_2$ as the number of occurrences where $e$ exists in either $E_1$ or $E_2$ but not the other, and $c_3$ as the number of occurrences where neither $E_1$ or $E_2$ contain $e$. The DAG dissimilarity between two graphs $G_1 = (A, E_1)$ and $G_2 = (A, E_2)$ is $K(G_1, G_2, y_1, y_2) = c_1 + c_2y_1 + c_3y_2$, where $0 \leq y_2 < y_1 \leq 1$ are constants. We choose $y_1 = 2/3$ and $y_2 = 1/3$, via integration.\footnote{We decide to get a representative value from the set of possibilities $0 \leq y_2 < y_1 \leq 1$, so we use the \textit{centroid}, found by the analytical solution to the integration of $K$ over $y_1$ from $0$ to $1$, and $y_2$ from $0$ to $y_1$.}

\subsection{Strategies}
\label{subsection:strategies}
Let $\mathcal{D}$ be the set of all possible privacy-aware dialogues. Let $D = m_0, \dots, m_n$ be a privacy-aware dialogue, with $m_n = (w,a)$ the last movement used. We define $\eta : \mathcal{D} \rightarrow 2^{\mathcal{A}_x}$ as the function that returns the set of all arguments $r \in \mathcal{A}_x$ such that $r$ attacks $a$ and $m_{n+1} = (\overline{w},r)$ can be used as the next move in the dialogue, that is, $m_0, \dots, m_n, m_{n+1}$ is also a privacy-aware dialogue. We enumerate 4 strategies for choosing a rebuttal argument $r \in \eta(D)$.

\begin{enumerate}
    \item \textit{Random:} $r$ is sampled randomly with equal probability for all $r \in \eta(D)$.
    \item \textit{Minimum Cost:} $r$ is the argument in $\eta(D)$ with lowest cost.
    \item \textit{Offensive:} $r$ is the argument in $\eta(D)$ that attacks most other arguments in $\mathcal{A}_x$.
    \item \textit{Defensive:} $r$ is the argument in $\eta(D)$ that suffers the least attacks in $\mathcal{A}_x$.
\end{enumerate}

Let $\alpha \in \mathcal{S}$ be a strategy, where $$\mathcal{S} = \{\texttt{random}, \texttt{min\_cost}, \texttt{offensive}, \texttt{defensive}\}.$$ A \textit{result} is defined as an $n\times n$ matrix $\mathbf{RE} : \mathbb{Z}^+ \times \mathcal{S} \rightarrow R^{n \times n}$, where $R = \{pr, op\}$. Every item $\mathbf{RE}_{j,k} = \phi(\alpha(q_j, q_k, g), \alpha(q_k, q_j, g))$, for $\mathbf{RE}_{j,k}$ represents the \textit{dispute resolution outcome} of a dispute between agents $q_j$ and $q_k$, assuming the roles of $pr$ and $op$, respectively.
We instantiate $\phi$ by carrying out a dialogue between $q_j$ and $q_k$ until a winner is found, returning $pr$ or $op$ accordingly.
Analogously to the ground truth model, we can generate a precedence graph $G_{\mathbf{RE}(g, \alpha)} = (A, E)$, where for every two distinct agents $q_j, q_k \in A$, we say an arc $(q_j, q_k) \in E$ iff $\mathbf{RE}_{j,k} = pr$ and $\mathbf{RE}_{k,j} = op$.

\subsection{Simulations}
\label{subsection:random-experiments}
Each experiment consists of a set of $|A| = 50$ agents with fixed feature values initialised randomly. For each experiment, we create a random acyclic culture $C = (\mathcal{A}, \mathcal{R}, \mathcal{K})$ with $|\mathcal{A}| = 50$, $|\mathcal{R}| \simeq 400$, and $\mathcal{K} = \{\gamma\}$. We generate an \textit{alteroceptive expansion} $C_x$ of $C$ where the privacy cost $\tau(a_x)$ for all $a_x \in \mathcal{A}_x$ is randomly initialised as $1 \leq \tau(a_x) \leq 20$. Using privacy-aware dialogue games as dispute resolution mechanisms, we generate a precedence graph $G_{\mathbf{RE}(g, \alpha)}$ for each strategy $\alpha \in \mathcal{S}$.

We repeat each experiment for integer value privacy budgets $0 \leq g < 60$. We aggregate measures of global and local fairness over 1900 trials,\footnote{Total simulation time exceeded 55 hours on an \texttt{i7-8550U} CPU with 16GB of RAM.} where each trial randomly initialises all agents' feature descriptions and the culture. 

The plots in Figure~\ref{fig:local_global} show the average subjective local fairness loss (left) and the average objective global fairness loss (right) for all four strategies over a range of privacy budget values. All strategies converge to zero subjective local fairness loss given high enough privacy budgets. However, at a given privacy budget value, the defensive strategy dominates the other strategies. Similarly, the defensive strategy is dominant with respect to objective global fairness. In addition to achieving a lower objective global fairness loss at a given privacy budget value, the convergence value of the defensive strategy is lower than that of the other strategies.

\begin{figure}
\includegraphics[width=0.49\linewidth]{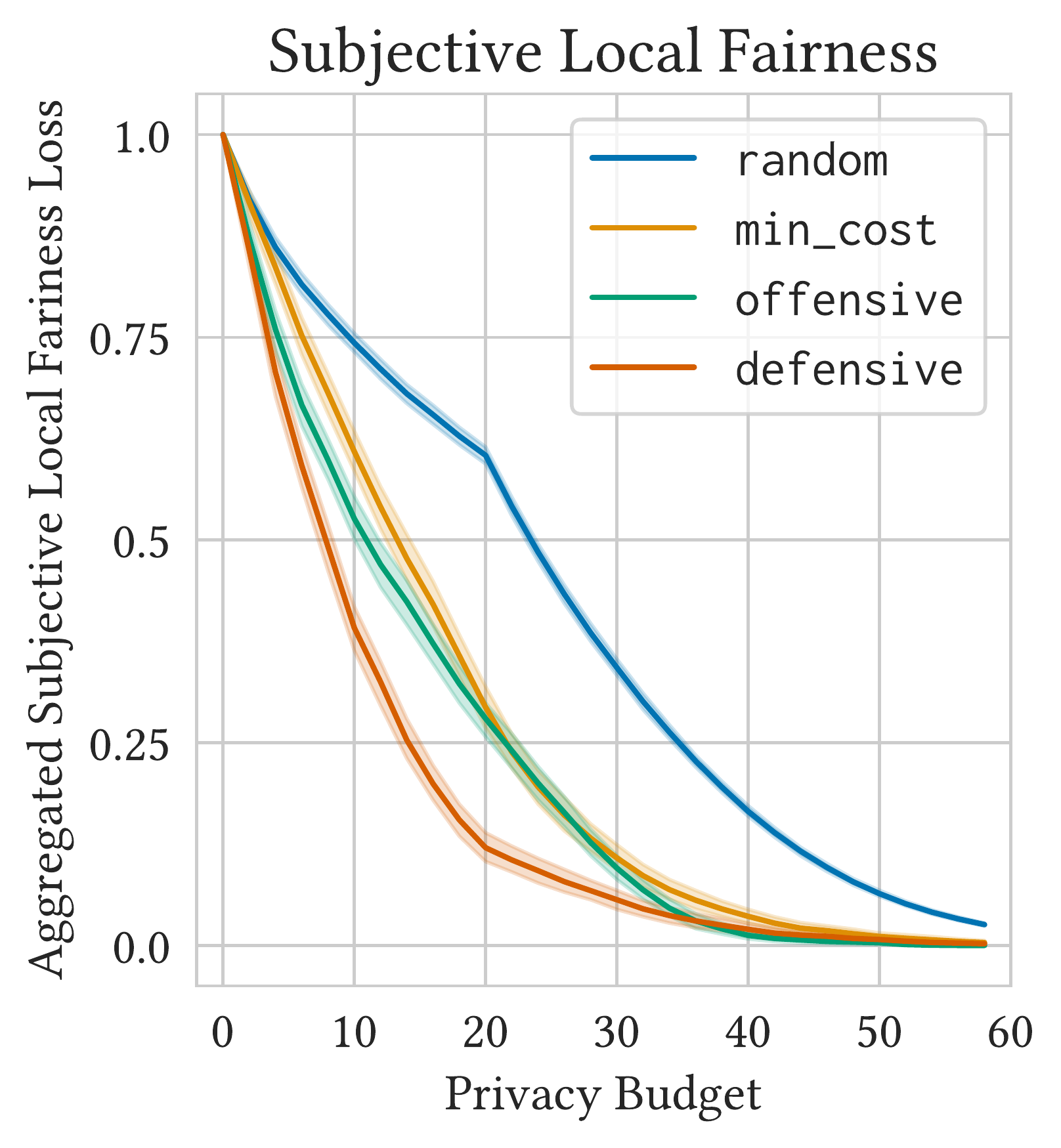}
\includegraphics[width=0.49\linewidth]{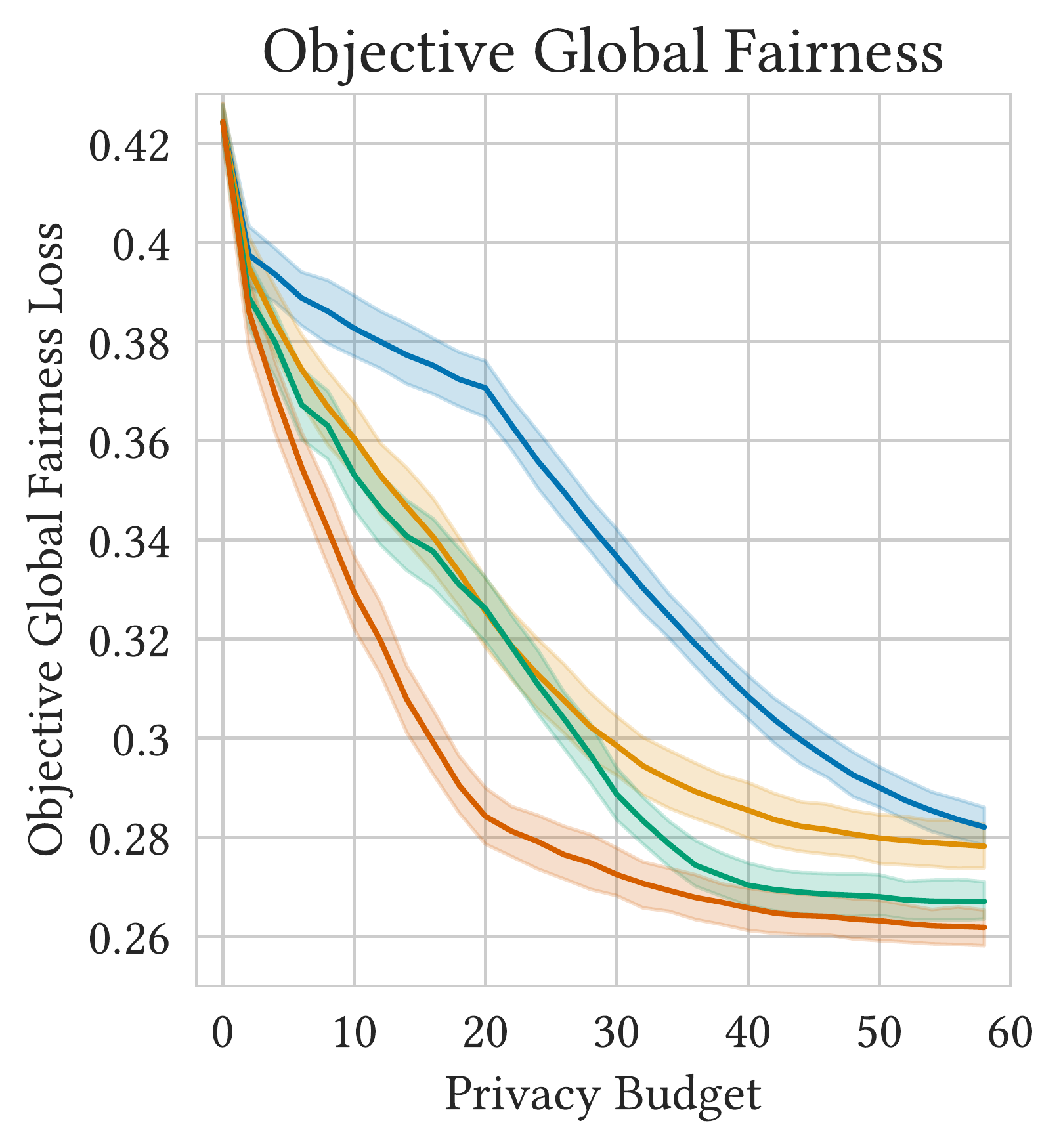}

\caption{The defensive strategy outperforms the other strategies with respect to both local and global fairness under privacy constraints. We show the average subjective local fairness loss (left) and the average objective global fairness loss (right) across a range of privacy budgets for \texttt{random}, \texttt{min\_cost}, \texttt{offensive}, and \texttt{defensive}. The shaded regions show 99\% confidence intervals on the mean values. Note that the small change in behaviour for $g \geq 20$ is due to the fact that $\max \tau(a_x) = 20$ (see Section~\ref{subsection:random-experiments}).}
\label{fig:local_global}
\end{figure}

\subsection{Privacy Efficiency}
\label{subsection:privacy-efficiency}

We run approximately 10 million dialogues between 95,000 different agents, using the same 1900 random cultures from the previous experiment. However, this time we observe how much privacy cost was used by each strategy to finish the dialogues. Agents were free to extend their dialogues for as long as required. Figure~\ref{fig:privacy-efficiency-random} shows an empirical cumulative distribution function of the proportion of dialogues with cost $z$, where $z \geq z'$, s.t.~$l_{SL}(q_i, q_j, r, z') = 1$, for any $\{q_i, q_j\} \subset A$, $\alpha \in \mathcal{S}$, and any $r \in \{pr, op\}$ (i.e. dialogues that need a privacy budget higher than $z$ to not be cut short by it). This illustrates the effect of different strategies in minimising privacy cost, even in unrestricted dialogues. Results are consistent with the findings in Section~\ref{subsection:random-experiments}.

\begin{figure}
\centering
\includegraphics[width=0.7\linewidth]{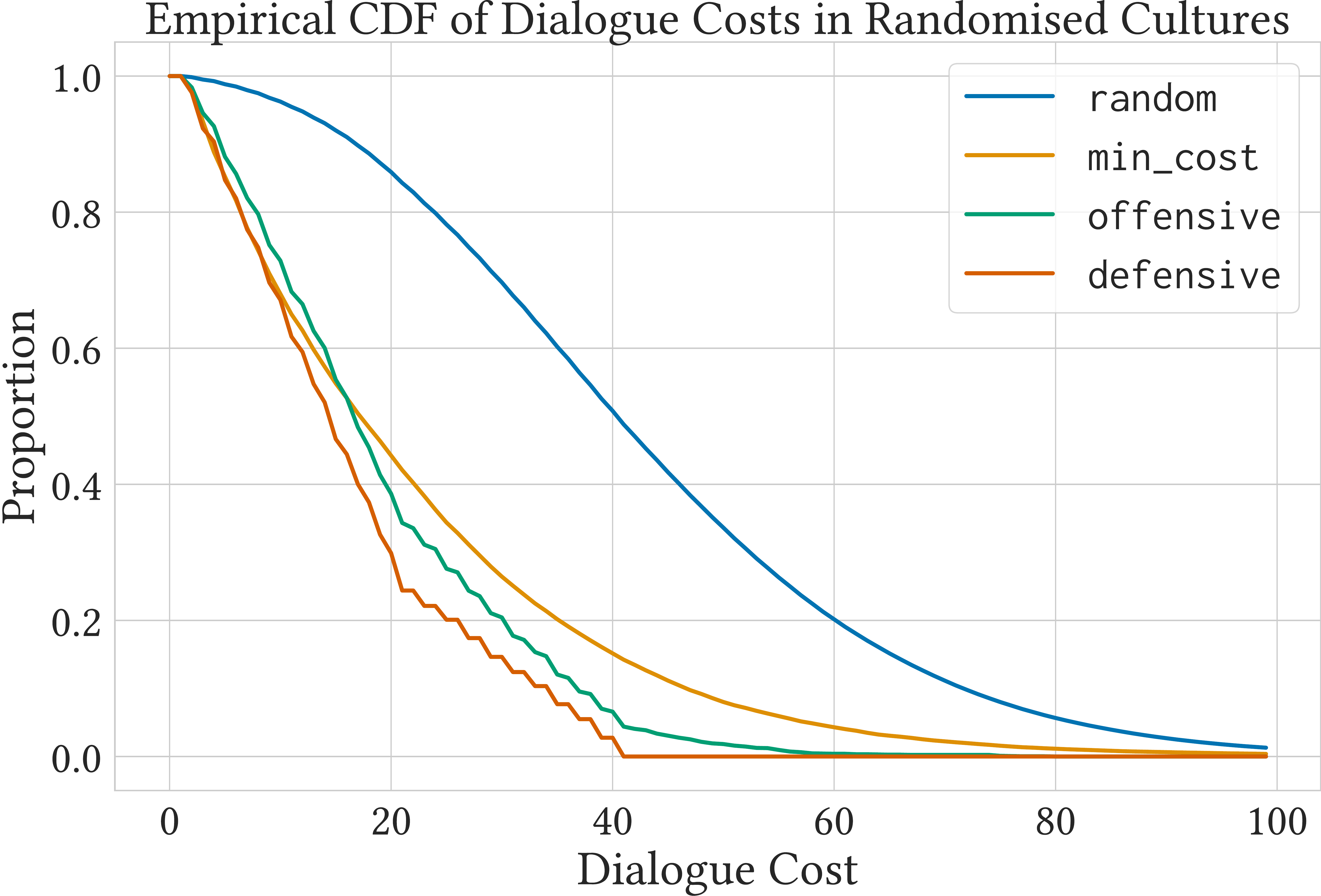}
\caption{An empirical cumulative distribution function of privacy costs for unrestricted dialogues in over 1900 different random cultures.}
\label{fig:privacy-efficiency-random}
\end{figure}

\section{Multi-agent Application}
\label{section:boats}

Manifestly, the previous empirical analysis explores important, but essentially \textit{abstract} aspects of the problem. Observing properties under a spectrum of multiple different privacy budgets and with large randomly-generated cultures allows us to observe a space of multiple possible systems. However, realistic applications are not likely to explore an assortment of privacy budgets, as the limits on privacy are likely to be fixed or seldom vary. In like manner, cultures can be explainable representations of real-world rules and their exceptions - and much like in the privacy budgets' case, these are also unlikely to fluctuate in applications predicated in reality.

On these grounds, we build on this conceptual foundation to demonstrate how said aspects may also materialise in an applied setting, even under assumptions of: \textit{i)} a single fixed privacy budget, and \textit{ii)} a single fixed alteroceptive culture. Drawing inspiration from the \textit{`Busy Barracks'} game seen in \citep{Raymond2020Culture-BasedDeconfliction}, we apply our architecture to a multi-agent simulation of speedboats (see Figure~\ref{fig:simulator}). 

\begin{figure}
\includegraphics[width=\linewidth]{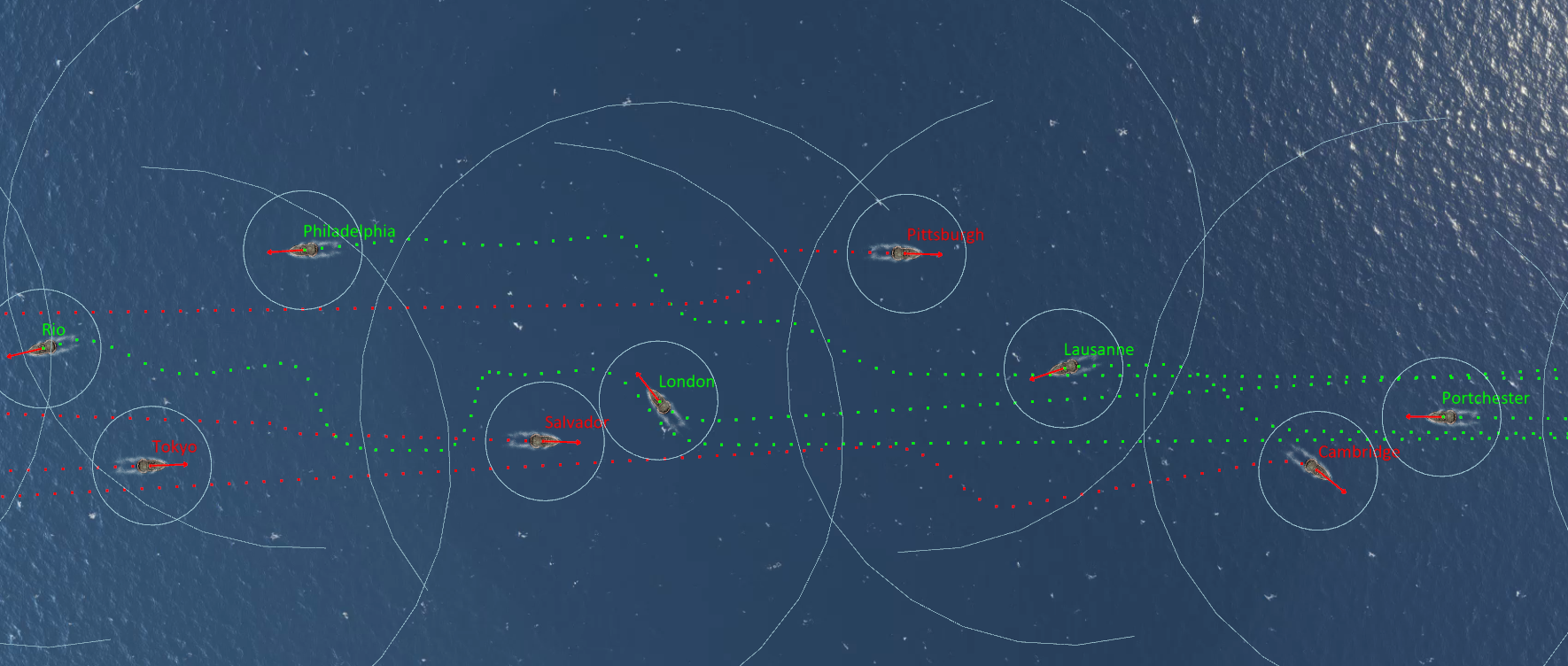}
\caption{Screenshot of our multi-agent collision avoidance simulation. Speedboats are required to engage in privacy-aware dialogues to resolve who has right-of-way before getting into the risk of collision. Boat and ripple art from \citep{Felvegi2013ShipsEffect}.}
\label{fig:simulator}
\end{figure}

\subsection{Simulator}

Our environment consists of a 20km long $\times$ 2km wide two-dimensional water surface, without any obstacles. In every trial, 16 agents are instantiated in a `head-on parade' scenario, as follows:
\begin{itemize}
    \item 8 out of the 16 agents start from the west, and 8 others start from the east.
    \item Every agent starts with at least 1km of longitudinal separation from any other agent.
    \item Their start and goal vertical coordinates are randomised within $\pm$200m from the vertical midpoint.
    \item Every agent's destination is at the opposite side of the map, and they are initialised with an exact heading towards it. If undisturbed, any agent should be able to execute a perfect straight line towards their destination without any lateral movement or rotation.
\end{itemize}

Speedboats are simulated with a physics model adapted from \citep{Monster2003CarGames} and \citep{Linkovich2016Carphysics2d}, where elements such as water resistance, drag, and mass are taken into consideration when calculating frames. We simulate and control each boat at 20Hz refresh rate. Each trial spans around 10-20 minutes of simulation at 20Hz for 16 agents, and we collect around 400,000 telemetry data points per trial, including velocity, lateral acceleration, yaw rate, and lateral jerk. We run 900 trials and generate over 40GB of trajectory and telemetry data for our analyses.

The properties of the vehicle are coarsely modelled after the real-life high performance passenger boat Hawk 38 \citep{Sunseeker2019SunseekerBrochure}, with capacity for 7 seated passengers, dual 400hp motors, and top speed of 30m/s (approx. 60 knots). It is assumed that vehicles will drive at maximum speed whenever possible.

\subsection{Boat Culture}

Akin to Section~\ref{section:random-setup}, each agent is bestowed with properties extracted from a culture, initialised according to a given distribution. Table~\ref{table:boat-culture} illustrates the culture designed for the experiment. To preserve a certain degree of realism, the initialisation of each agent is still random, but respecting certain reasonable restrictions (e.g. civilians cannot have high military ranks or engage in combat, spies are always civilian or corporate, etc).
\begin{table}[]

\begin{tabularx}{\textwidth}{|l|l|l|X|l|}
\hline
$i$  & \textbf{Property}            & $\tau(i)$      & \textbf{Possible values} of $\mu$ \textit{(ascending order of importance)}                                         & \textbf{Attacks}               \\ \hline
0  & motion                & $0$            & -                                                                                                & -                     \\ \hline
1  & VehicleAge            & $4$            & \texttt{\{new, used, worn, old, vintage\}}                                                       & \{0\}                 \\ \hline
2  & VehicleCost           & $10$           & \texttt{\{cheap, ok, expensive, very\_expensive, millions\}}                                     & \{0, 1\}              \\ \hline
3  & HigherCategory        & $0$            & \texttt{\{civilian, corporate, police, coast\_guard, military\}}                                 & \{0, 1, 2\}           \\ \hline
4  & TaskedStatus          & $3$            & \texttt{\{at\_ease, returning, tasked\}}                                                         & \{0, 1, 2, 3\}        \\ \hline
5  & PayloadType           & $5$            & \texttt{\{empty, food, medical\_supplies\}}                                                      & \{0, 1, 2\}           \\ \hline
6  & TaskNature            & $7$            & \texttt{\{leisure, sport, trade, training, patrol, pursuit, combat\}}                            & \{4, 5\}              \\ \hline
7  & VIPOnBoard            & $13$           & \texttt{\{ordinary\_person, business\_person, celebrity, politician\}}                           & \{0, 1, 2, 4\}        \\ \hline
8  & MilitaryRank          & $8$            & \texttt{\{no\_rank, officer, lieutenant, commander, captain, major, colonel, general, admiral\}} & \{3, 5, 6, 7\}        \\ \hline
9  & DiplomaticCredentials & $12$           & \texttt{\{no\_credentials, diplomat, united\_nations\}}                                          & \{0..8\}              \\ \hline
10 & SensitivePayload      & $15$           & \texttt{\{no\_sensitive\_payload, weapons, wanted\_prisoner\}}                                   & \{0..9\}              \\ \hline
11 & UndercoverOps         & $20$           & \texttt{\{no\_spy, spy\}}                                                                        & \{3, 4, 6, 7, 8, 10\} \\ \hline
12 & EmergencyNature       & $10$           & \texttt{\{no\_emergency, mechanical, sick\_passenger, fire\}}                                    & \{0..11\}             \\ \hline
13 & SuperVIPOnBoard       & $16$           & \texttt{\{no\_super\_vip, prime\_minister, head\_of\_state\}}                                    & \{0..12\}             \\ \hline
\end{tabularx}%
\caption{Properties present in our culture, along with their possible values for each agent. The `attacks' column indicates which other properties can be defeated by the respective row. }
\label{table:boat-culture}
\end{table}

\subsection{Dialogues and Collision Avoidance}

In our simulation, agents have no prior knowledge of other agents' descriptions. Properties with $\tau = 0$ can be communicated instantaneously. It is assumed that agents act lawfully and will not lie, but they are nonetheless self-interested and will fight for the right of way in every conflict that arises. In our scenario, we establish that more sensitive information requires further \textit{time to obtain clearance} before communicating to another agent. Therefore, in this experiment, \textit{privacy costs are associated with time} (and consequently, distance). We model their avoidance mechanism using a modified artificial potential field algorithm \citep{Warren1990MultipleFields}.

Let $q_i, q_j \in A$ be any two agents, and $dist: A \times A \times \mathbb{Z} \rightarrow \mathbb{Z}^+$ be a function that determines the euclidean distance between two agents at a specific time $t$. We define two constants $r_\text{max} = 1000$ and $r_\text{crit} = 100$ as the maximum effect and the critical minimum radii of a potential field. In our method, when $dist(q_i, q_j, t) > r_\text{max}$ and $dist(q_i, q_j, t+1) \leq r_\text{max}$ (i.e., the agents just got within the maximum effect radius for the first time), $q_i$ and $q_j$ will initiate a dialogue game $\phi(q_i, q_j, g)$ to decide who has right of way, where $g$ is the maximum privacy budget for each agent.

\begin{figure}[htb]
\centering
\includegraphics[width=\linewidth]{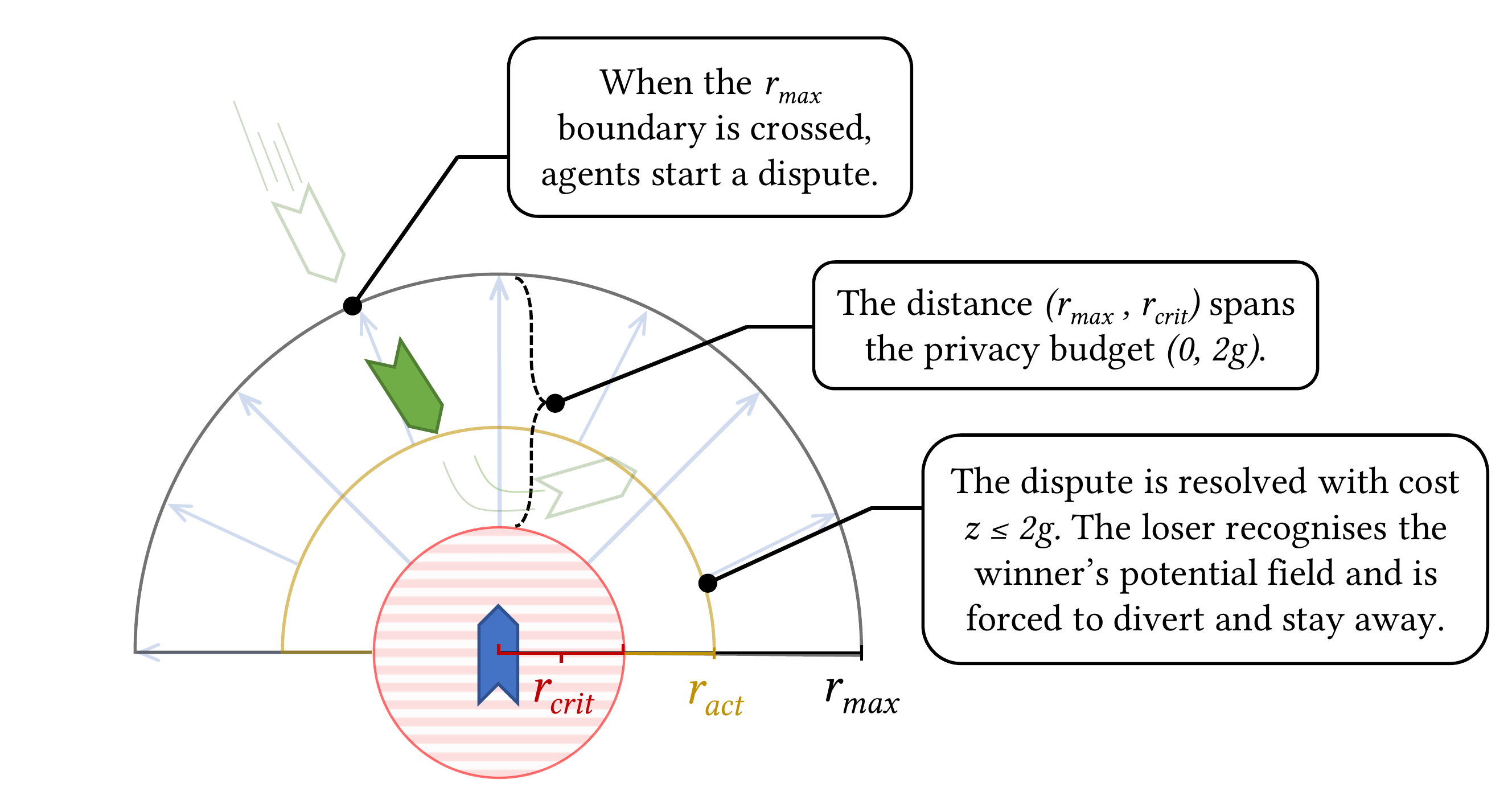}
\caption{Illustration of the collision avoidance mechanism from Definition~\ref{definition:activation-radius}.}
\label{fig:potential-field}
\end{figure}

\begin{definition} [Activation Radius]
Starting from the centre of each agent, we divide the space between $r_\text{max}$ and $r_\text{crit}$ into $2g$ concentric and uniformly-expanding rings. Let $\mathcal{T}(q)$ be the description cost of an agent, that is, how much of its privacy budget was spent to reveal information. Without loss of generality, suppose $q_i \in \{q_i, q_j\}$ is the winner of the dialogue game and $z = \mathcal{T}(q_i) + \mathcal{T}(q_j), z \leq 2g$ is the total combined privacy cost for that dialogue, considering both agents. The activation radius $$r_\text{act} = r_\text{max} - \frac{z}{2g} (r_\text{max} - r_\text{crit})$$ is the distance where the potential field (with full radius $r_\text{max}$) will be enabled for the first time for the losing agent, and will remain in effect until the end of the simulation.
\label{definition:activation-radius}
\end{definition}

In plain language, agents will start the dialogue as soon as they cross the $r_\text{max} = 1000$ boundary, and their cumulative privacy cost $z$ (e.g. the \textit{time} the dialogue takes) will determine how early or how late they ultimately decide which agent should have right of way (see Figure~\ref{fig:potential-field}). Higher values of $z$ will lead to late, aggressive evasive manoeuvres. In case $dist(q_i, q_j) < r_\text{crit}$, the \textit{winner} is also forced to divert to avoid a collision. In an ideal scenario, most agents will complete their dialogues in full with very low privacy costs and execute early, smooth manoeuvres to stay clear of other agents. Conversely, if dialogues are wasteful or activation radii are closer to $r_\text{crit}$ than $r_\text{max}$, then one of the following may happen in the \textit{local} scope, for any $q_i, q_j \in A$ and $r \in \{pr, op\}$: 

\begin{itemize}
    \item the evader makes a very late and aggressive evasion turn \textit{(low privacy efficiency: high $\mathcal{T}(\alpha(q_i, q_j, g))$)};
    \item the late evader accepts defeat but breaks into the critical radius of the winner and forces them out of their rightful trajectory \textit{(objective unfairness: $l_{OL}(q_i,q_j,g) = 1$ and $l_{SL}(q_i, q_j, r, g) = 0$)}; 
    \item the right agent wins the dialogue game, but the losing agent is not fully convinced all their reasons are covered \textit{(subjective unfairness: $l_{OL}(q_i,q_j,g) = 0$ and $l_{SL}(q_i, q_j, r, g) = 1$)};
    \item the wrong agent wins the dialogue game and forces the right one out of their way since they had no budget remaining to get to the correct decision \textit{(subjective and objective unfairness: $l_{OL}(q_i,q_j,g) = 1$ and $l_{SL}(q_i, q_j, r, g) = 1$)}.
\end{itemize}

As per our previous abstract experiment, we demonstrated that different strategies for choosing arguments (explanations) during the dialogue game leads to varied levels of performance with regards to privacy efficiency, subjective fairness, and objective fairness. We will perform simulations using the same strategies seen in Section~\ref{subsection:strategies} in the sections below.

\subsection{Privacy Efficiency Experiments}

Since this experiment involves a real simulation of a multi-agent system, its magnitude is significantly smaller in terms of number of dialogues and agents. We repeat the experiment in Section~\ref{subsection:privacy-efficiency}, this time with only one culture. We observe the privacy costs of 48,000 dialogues between the previous set of 1600 randomly-initialised agents for each strategy without any restrictions in privacy.

Figure~\ref{fig:privacy-efficiency} shows the effect of different strategies in minimising privacy cost for \textit{unrestricted} dialogues in the Boat Culture. Two phenomena can be observed: the greedy strategy for minimising cost (\texttt{min\_cost}) turns out to be the worse at that. This is due to the fact that, as privacy costs are not randomly determined this time, privacy costs are correlated with argument relevance/power, making stronger arguments more expensive. Additionally, both \texttt{offensive} and \texttt{defensive} exhibit similar performance. In a similar nature, in this culture, stronger arguments simultaneously attack many others, whilst being relatively unattacked. 

\begin{figure}
\centering
\includegraphics[width=0.7\linewidth]{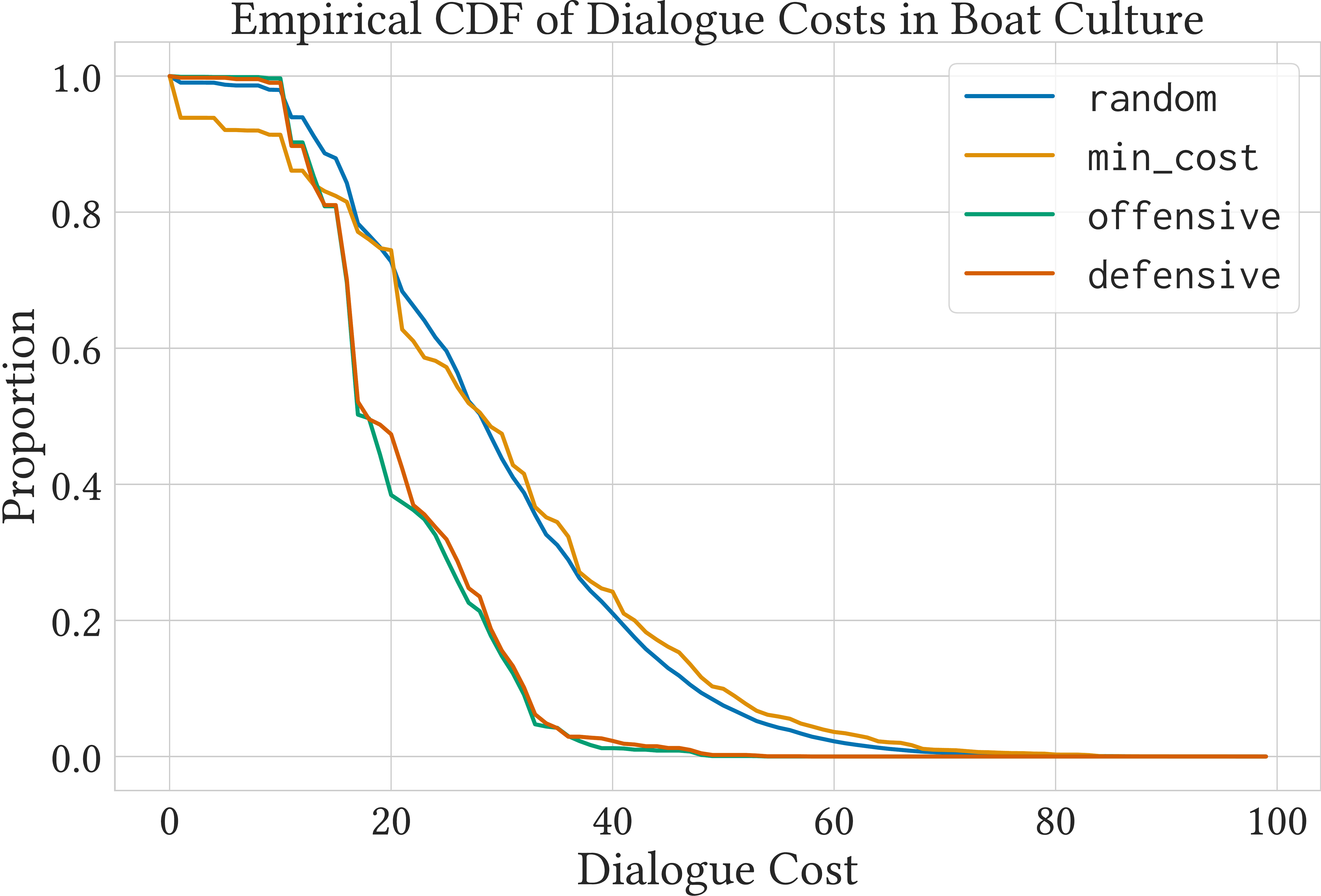}
\caption{Privacy costs for unrestricted dialogues in our instantiated Boat Culture (as shown in Table~\ref{table:boat-culture}). Note that some behaviours are different from Figure~\ref{fig:privacy-efficiency-random}. Contrary to the randomised case, where over 2000 different cultures are tested, this plot represents dialogues under a single culture.}
\label{fig:privacy-efficiency}
\end{figure}

\subsection{Ride Quality}

A more applied perspective on the same matter is to observe the status of vehicles across their trajectories. In our simulation, most disturbances and evasions take place when both sides of the parade meet in the middle. These conflicts generate peaks in lateral acceleration and other measurements. When those conflicts are resolved, agents usually follow trouble-free paths to their destinations.
We are interested in quantifying a general notion of `ride comfort' (supposing an acceptable limit of `comfort' for a powerboat at maximum speed) and minimising the `ride roughness' for the fictitious average passenger on board. We discard the initial acceleration and final braking data points, as we are only interested in changes in acceleration caused by other agents.

To do so, we integrate the area under each of those metrics per agent, and compute the mean of each agent's integrals per trial. We run 100 trials per strategy with a combined privacy budget value of $2g = 60$ and aggregate these values. We associate higher values of lateral acceleration and jerk represent as strong components in a passenger's experience of discomfort \citep{Nguyen2019InsightSingapore}. 

The first row in Figure~\ref{fig:global-results} shows the distribution of lateral acceleration, yaw rate, and lateral jerk measurements over 1600 simulations (16 agents x 100 trials) per strategy. We observe significantly lower values from \texttt{offensive} and \texttt{defensive} in all metrics. The \texttt{random} strategy also exhibited significantly lower values of lateral jerk compared to \texttt{min\_cost}.

\subsection{Subjective and Objective Trajectories}

\begin{figure}
\includegraphics[width=\linewidth]{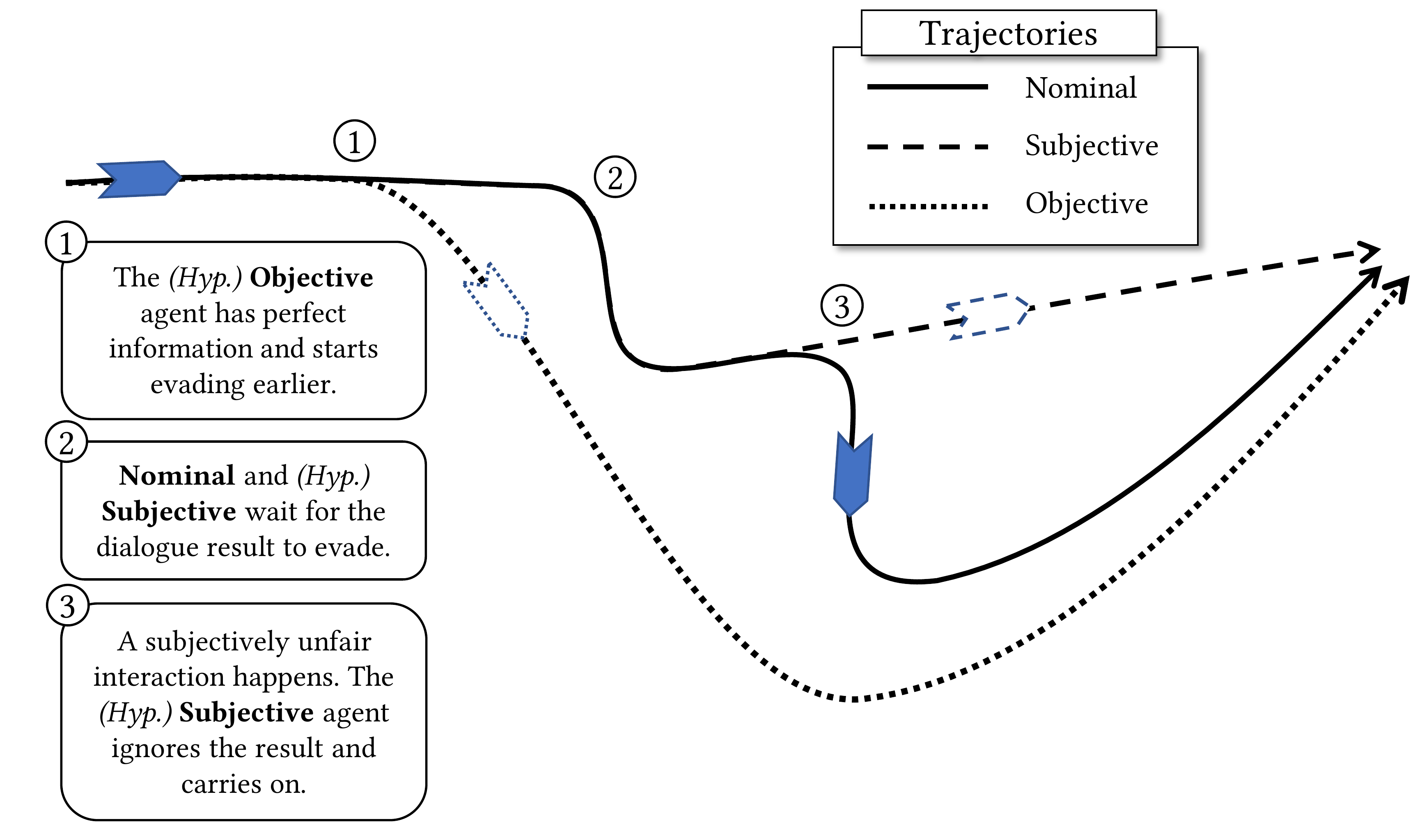}
\caption{An example of a Nominal trajectory and its hypothetical Subjective and Objective scenarios. The full line (Nominal) represents the actual trajectory performed by the agent in their trial. The Objective hypothesis (dotted line) can perform smoother manoeuvres and evade much sooner due to perfect information being immediately available at $r_\text{max}$. The Subjective hypothesis (dashed line) behaves like the Nominal trajectory, but will not concede upon subjective unfairness cases.}
\label{fig:compare-trajec}
\end{figure}

After observing the impact of privacy on the \textit{quality} of the ride, we now measure agents' perspectives in a \textit{global} scope. Instead of counting local instances of objective and subjective unfairness, we generate entire trajectories that cater to specific requirements. From this point, we will apply the following nomenclature:

\begin{itemize}
    \item \textbf{Nominal} trajectory ($\mathcal{J}_N$): this is the actual trajectory executed by the agent in a regular trial. Agents perform their dialogues normally and give way in case of defeat (as shown in Figure~\ref{fig:potential-field}), according to the current strategy mutually adopted by all agents.
    \item \textit{(Hypothetical)} \textbf{Subjective} trajectory ($\mathcal{J}^h_{S}$): same as Nominal, but whenever an agent loses a dialogue that it deems unfair, it hypothesises what would happen if the agent ignored the potential field of the opponent and forcefully drove straight through. In this case, agents will never `agree to disagree' and will \textit{not} give way if the dialogue ends due to privacy limitations.
    \item \textit{(Hypothetical)} \textbf{Objective} trajectory ($\mathcal{J}^h_{O}$): this is a trajectory where all conflict resolution is optimal and derives from the ground truth extensions. Dialogues do not exist and agents have perfect information at cost 0. It hypothesises what would happen if all agents had full mutual knowledge and always made the right decisions.
\end{itemize}

These trajectories are advantageously intuitive and serve as an example of application-based representations of policies under different perspectives. An illustration of these trajectories can be seen in Figure~\ref{fig:compare-trajec}. If agents have efficient strategies for selecting relevant information/explanations, their Nominal trajectories will converge to the objectively correct answer more frequently (and more quickly), thus becoming increasingly similar to the Objective hypothesis. Likewise, if strategies are efficient, subjective unfairness will be reduced as agents will be able to get to the end of their dialogues -- and thus the decisions made in the Nominal trajectory and its corresponding Subjective hypothesis will agree.

Let $\mathcal{J}_1, \mathcal{J}_2$ denote any 2 trajectories. We calculate dissimilarity measures across comparable trajectories using the discrete Fréchet distance\footnote{\textit{`Informally, it is the minimum length of a leash required to connect a dog, walking along a trajectory $\mathcal{J}_1$, and its owner, walking along a trajectory $\mathcal{J}_2$, as they walk without backtracking along their respective curves from one endpoint to the other.'} - \citet{Agarwal2014ComputingTime}} $Frec(\mathcal{J}_1, \mathcal{J}_2)$. We repeated the tests with two other methods (Partial Curve Mapping and Dynamic Time Warp) and achieved similar results. We chose the aforementioned metric for its intuitive value and simplicity, and will omit the other results as they are redundant. For every strategy, we will compare 3 pairs of trajectories:

\begin{itemize}
    \item \textbf{Objective Unfairness} ($\Omega = Frec(\mathcal{J}_N, \mathcal{J}^h_{O}$)). This represents the level of agreement between the Nominal trajectory and the hypothetical Objective trajectory in a perfect scenario. This mostly captures how objectively correct the decisions were. In Figure~\ref{fig:global-results}, it is possible to observe that \texttt{offensive} and \texttt{defensive} exhibit much more agreement with the correct trajectory than the other two strategies.
    \item \textbf{Subjective Unfairness} ($\Omega_p = Frec(\mathcal{J}_N, \mathcal{J}^h_{S}$)). This represents the level of agreement between the Nominal trajectory and \textit{what the agent considers to have been the right action}, i.e. the hypothetical Subjective trajectory. High levels of disagreement in are representative of high global subjective unfairness - as many agents have very different perceptions of what the right trajectories were. If privacy budgets are minimal and communication is near-impossible, most subjective trajectories will be straight lines, as no agent will ever be fully convinced to alter their route.
    \item \textbf{Subjectivity Gap} ($l_{S} = Frec(\mathcal{J}_N, \mathcal{J}^h_{S})$). This is an assessment on how accurate is the agents' \textit{perception of correctness} to the \textit{actual correctness}. This is measured as the agreement between the hypothetical Subjective and Objective trajectories. We name this phenomenon as the `subjectivity gap.'
\end{itemize}

\begin{figure}
\includegraphics[width=\linewidth]{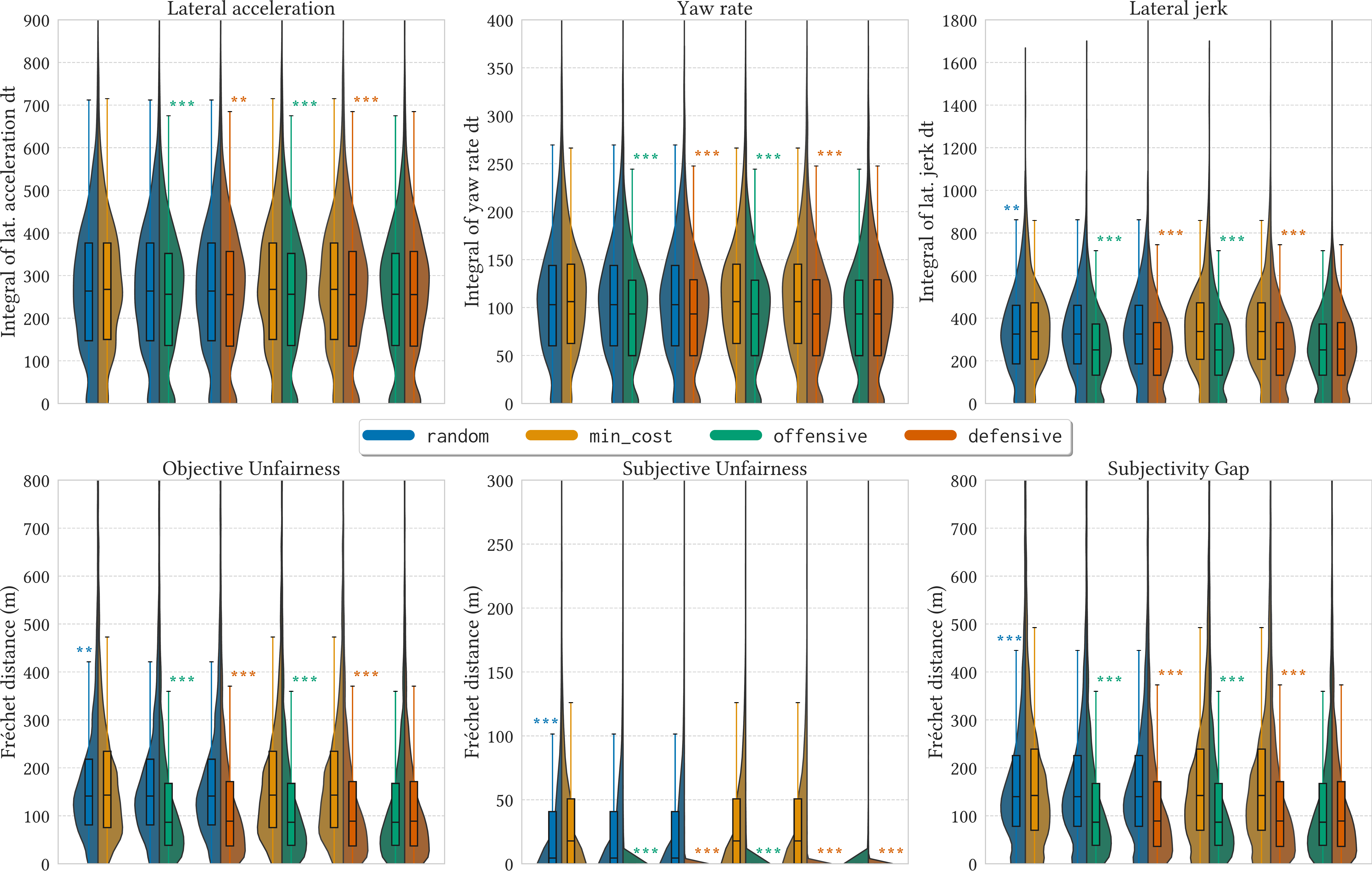}
\caption{Aggregated data of 100 full simulations with a fixed combined privacy budget $2g=60$. Each strategy is shown in a different colour. We perform pairwise statistical two-sided and one-sided Student's t-tests between distributions to check if they differ significantly, and if so, which one presents lower values ((*) $p < 0.05$; (**) $p < 0.01$; (***) $p < 0.001$). The top row shows an aggregation of all agents' simulation metrics ($n = 1600$ per strategy), where lower is indicative of a more comfortable ride. Each data point in the top plots is the area under the curve for that metric (obtained via integration). The bottom plots show values of objective unfairness, subjective unfairness and the subjectivity gap, measured as the Fréchet distance of the corresponding trajectories ($n=32000$ per strategy). Lower values are indicative of more agreement between trajectories.}
\label{fig:global-results}
\end{figure}

The results in the bottom row of Figure~\ref{fig:global-results} show aggregated results for objective and subjective unfairness, as well as the subjectivity gap for each strategy. We perform pairwise comparisons to evaluate how different fare against each other in terms of higher ride comfort (lower dynamics) and lower unfairness values. Higher values in the first row indicate that, when using that strategy, agents experienced higher motion dynamics associated with discomfort. Likewise, higher values in the second row demonstrate a higher disagreement between the pairs of trajectories when agents use that strategy. We observe that \texttt{min\_cost} exhibits the worst performance in all metrics against all other strategies. In second-to-last, \texttt{random} is bested by the other 2 strategies. More remarkably, the winning strategies \texttt{offensive} and \texttt{defensive} exhibit mostly equivalent behaviour, and both are able to practically eradicate occurrences of subjective unfairness within that given privacy budget.

\section{Discussion}

Our experimental results explored four intuitive strategies for conflict resolution under privacy constraints: \texttt{random}, which does not exploit any problem structure; \texttt{min\_cost}, which greedily exploits the cost structure; and \texttt{offensive} and \texttt{defensive}, which are informed by the structure of the underlying culture. The strategy performance follows an expected order. Any informed strategy outperforms the baseline \texttt{random}. The exploitation of the culture structure in \texttt{offensive} and \texttt{defensive} yields better results than just using privacy information, as in \texttt{min\_cost}. Finally, \texttt{defensive} is expected to be the best strategy, since it follows a more natural principle in argumentation: that an argument which has no attackers must be accepted. In this scenario, choosing arguments with fewer attackers makes it more likely that this argument is final. Although many others (more involved) strategies could be used, our goal is to measure the effect of using informed strategies as opposed to strategies unaffected by the argumentative structure (the culture). That is, whether a better usage of the argumentative information leads to better results in both subjective and objective fairness, and other performance metrics in the experimental scenario, under identical privacy constraints.

\subsection{Randomised Cultures}

In the random experiment, Figure~\ref{fig:local_global} shows that \texttt{defensive} yields lower losses both locally and globally, especially for realistic privacy budgets that do not overly limit the length of dialogue while meaningfully constraining the amount of shared information. When unrestricting dialogues, \texttt{defensive} was also able to close dialogues with lower privacy costs. As expected, our results suggest that choosing better strategies imply Pareto dominance. 

When the privacy budget is near zero, all strategies perform similarly poorly. Little dialogue can occur, so agents lack the information to make objectively good decisions and to provide meaningful justifications. With high privacy budgets, strategies perform similarly well. Predictably, all strategies converge to zero subjective local fairness because dialogues can extend indefinitely. The distinct non-zero convergence values of the objective global fairness loss is due primarily to the ground truth used in these experiments.

Our ground truth corresponds to the sceptical acceptance of the proponent's motion. The motion is only validated if \textit{all} attacking arguments are defeated. Having at least one reason to defeat the motion is sufficient to preserve the status quo.\footnote{Since features are randomly uniformly sampled, the likelihood of a victory is associated with the probability of an agent prevailing in multiple random challenges.}  If both agents win in some unattacked arguments and lose in others, the outcome will always lean towards the opponent, even when they swap roles.

\subsection{Multi-Agent Experiment}

We observed similar results in the applied multi-agent experiment. Although the distinction between \texttt{defensive} and \texttt{offensive} was not as clear as in with randomised cultures, both still exhibited superior results and reinforced the argument towards the importance of selecting information by relevance. The formalism proposed in Section~\ref{section:fairness} allows for representing a range of different problems, endowing decentralised systems with considerations of subjectivity that can be modelled for non-human agents. 

By comparing trajectories under different perspectives, one could be tempted to see the Objective hypothesis through the lens of a more `traditional' aspect of trajectories, namely, by how much \textit{shorter} and \textit{quicker} they are, or by observing the system behaviour with metrics such as makespan or flowtime. In our case, better trajectories are not necessarily `optimal' in the traditional sense. They do not attempt to be shorter or quicker, but instead consider ride comfort metrics and ultimately select for better agreement between what actually happens, what \textit{actually} should have happened, and what the agents \textit{believe} should have happened.

The environment with 16 agents avoiding each other through combinations of artificial potential fields in continuous space with simulated physics is dense and complicated, and can lead to edge cases where agents' decisions can cascade and propagate to multiple others (especially if radii are large). Notwithstanding all the potential for results being confounded by this property of the system, we still managed to demonstrate statistically significant results for superior global performance with better strategies. We expect even clearer results in discrete applications, topological representations \citep{Bhattacharya2012TopologicalPlanning}, or in environments with planning in mind, such as Conflict-Based Search methods \citep{Sharon2015Conflict-basedPathfinding}.

For real applications, the properties and structure of the chosen alteroceptive culture play an important role in indicating which strategy will perform better towards the aimed Pareto dominance within the fairness-privacy trade-off. Perhaps with the exception of degenerate cases of cultures, our heuristics should provide a good guide for considering the choice of arguments for building explanations in fairness-aware conflict resolution disputes, especially if one cannot assess the relevance of the \textit{content} of the arguments (what it actually means), but has access to its structure (the ruleset that originated the culture, and the relationships between rules).

\section{Conclusion}

In this work, we propose \textit{perspective} and \textit{scope} as new considerations for the problem of fairness in decentralised conflict resolution. We show how privacy limitations introduce partial observability, and consequently, a \textit{trade-off} for fairness losses. Our proposed architecture for privacy-aware conflict resolution allows for the representation and resolution of such conflicts in multi-agent settings, and underpins our experimental setup.

Our central insight shows that simulating agents' actions from the exclusive perspective of their local knowledge and comparing them to the \textit{actual} executed behaviour can grant an understanding of their \textit{subjective} perceptions of fairness. Moreover, comparing this subjective perception to an omniscient \textit{objective} behaviour can also provide an insight on how well-informed agents are in their perception of the world. Different environments and applications could yield interesting combinations of these properties. For instance, a multi-agent society with low global subjective unfairness and a high subjectivity gap could indicate that agents are not acting in accordance to the desired notion of fairness, either by ignorance, design flaw, or adversarial intent.

This work presents no shortage of avenues for future studies. To name a few, extending the scope of interactions beyond pairwise conflict resolution can be investigated by means of collective argumentation \citep{Bodanza2017CollectiveFrameworks}. One also could observe the effects of the fairness-privacy trade-off in populations with heterogeneous strategies, and, in like manner, the consequences of heterogeneous privacy budget distributions for subjective and objective fairness, or even extending the dialogues to more general dispute trees, allowing agents to backtrack and find new lines of defence. To a wider regard, the use of cultures as mechanisms for explainable conflict resolution in rule-aware environments encourages similar reflection for our work. Our findings further reinforce the importance of (good) explanations and explanatory strategies in multi-agent systems, this time as mitigatory instruments for applications where subjectivity is a concern.

Inescapably, agents `agree to disagree' when compelled to concede a resource due to ulterior reasons, such as privacy preservation or time constraints. Whenever continuing the dispute would \textit{produce effects more undesirable than the loss of the dispute} in itself, agents tolerate those consequences and will prefer to abandon the dispute. This phenomenon is not necessarily new for humans, but extending this consideration for societies of artificial agents grants a dimension of subjectivity that is worth exploring in multi-agent systems research. Preparing non-human agents and systems to \textit{consider} subjective unfairness can be an important step in facilitating the integration of human agents in future hybrid multi-agent societies. We encourage readers and researchers in the field to regard subjectivity as an important consideration in fairness-aware decentralised systems with incomplete information.

\section*{Acknowledgements}
Alex Raymond was supported by the Royal Commission for the Exhibition of 1851 and L3Harris ASV. Matthew Malencia was supported by the National Science Foundation Graduate Research Fellowship under Grant DGE-1845298. Guilherme Paulino-Passos was supported by CAPES (Brazil, Ph.D. Scholarship 88881.174481/2018-01). A. Prorok was supported by the Engineering and Physical Sciences Research Council (grant EP/S015493/1). Their support is gratefully acknowledged.

\bibliographystyle{ACM-Reference-Format}  %

\end{document}